\newtheorem{theorem}{Theorem}
\newtheorem{proposition}{Proposition}
\newtheorem{remark}{Remark}
\newtheorem{definition}{Definition}
\newtheorem{problem}{Problem}
\begin{document}
%


\title{\vspace{0.5in}Distributed Transient Safety Verification via Robust Control Invariant Sets: A Microgrid Application}

\author{Jean-Baptiste Bouvier$^{1,2}$, Sai Pushpak Nandanoori$^{1}$, Melkior Ornik$^{2}$ and Soumya Kundu$^{1}$
\thanks{$^{1}$Jean-Baptiste Bouvier, Sai Pushpak Nandanoori and Soumya Kundu are with the Pacific Northwest National Laboratory, Richland, WA, USA. 
        {\tt\small \{saipushpak.n,\,soumya.kundu\}@pnnl.gov}}%
\thanks{$^{2}$Jean-Baptiste Bouvier and Melkior Ornik are with the Department of Aerospace Engineering and the Coordinated Science Laboratory at the University of Illinois Urbana-Champaign, Urbana, IL, USA. \break
        {\tt\small \{bouvier3, mornik\}@illinois.edu }}%
}


\maketitle

\begin{abstract}
Modern safety-critical energy infrastructures are increasingly operated in a hierarchical and modular control framework which allows for limited data exchange between the modules. In this context, it is important for each module to synthesize and communicate constraints on the values of exchanged information in order to assure system-wide safety. To ensure transient safety in inverter-based microgrids, we develop a set invariance-based distributed safety verification algorithm for each inverter module. Applying Nagumo's invariance condition, we construct a robust polynomial optimization problem to jointly search for safety-admissible set of control set-points and design parameters, under allowable disturbances from neighbors. We use sum-of-squares (SOS) programming to solve the verification problem and we perform numerical simulations using grid-forming inverters to illustrate the algorithm. 
\end{abstract}


%
\IEEEpeerreviewmaketitle

\section{Introduction}

The massive failure of the Texas electrical grid in February 2021 \cite{Texas} gave global coverage to the issue of power network resilience.
During these extreme events, time and resources are of essence for the grid operator to assess the situation and take appropriate actions to maintain the operating state of the power network. Hence, to an extent it is imperative on the grid operator to be prepared for extreme events. Microgrids, both grid-connected and stand-alone, have shown promise to enhance resilience and reliability by paving a way of coordinating multiple distributed energy resources (DERs) as a locally operated single controllable entity \cite{farrokhabadi2019microgrid,xu2016microgrids}. However, ensuring operational stability, safety, and reliability of any power network involves a complex multi-timescales problem, spanning sub-seconds to minutes and hours. Traditional power system control operations were largely structured around a temporal decoupling which allows slower-timescale operations (e.g., optimal dispatch) need not directly take into account faster-timescale constraints, and \textit{vice versa}. However, with the emergence of inverter-based DERs and the associated changes in power systems dynamics (e.g., reducing inertia), the timescales separation is expected to continue to shrink \cite{taylor2016power}. The droop-controlled inverter based microgrids have a lower inertia than conventional generators, which allows large variations of the voltage and frequency of each inverters \cite{xu2016microgrids, maulik2018stability}. The fluctuations happen during the transient evolution occurring as a result of a fault or due to the transitions between the power set-points, and can lead to violation of safety constraints \cite{schiffer2014conditions}. It is imperative to develop a mechanism to inform slower-timescale operations (e.g., optimal dispatch of power set-points) of the constraints arising from faster-timescale (transient) dynamics.

Many recent efforts have addressed this need via, for example, stability and security-constrained optimization \cite{barklund2008energy,xu2016microgrids,maulik2018stability}, identification of local and distributed parametric stability conditions \cite{schiffer2014conditions,kundu2019identifying,nandanoori2020distributed}, etc. The distributed identification of stability conditions are particularly interesting since these fit well into a multi-ownership models of microgrid resources, and facilitate hierarchical and plug-and-play operations \cite{guerrero2010hierarchical}. However, most of the related literature, as above, only focus on the stability which concerns with the convergence of power system trajectories (close) to its normal operating point after a disturbance. The concept of safety, on the other hand, relates to avoiding critical operational limits (e.g., on voltages and frequencies), even under large disturbances. Safety is closely tied to resilience, since often in cyber-physical adversarial scenarios, an immediate priority is to contain the system trajectories within some acceptable set, rather than ensuring return to normality.

Safety-constrained control techniques are gaining recent attention in the power systems community \cite{kundu2019distributed,chen2019compositional,zhang2020distributed,kundu2020transient}. Model-predictive control \cite{mayne2000constrained} remains one of the most commonly used methods for enforcing dynamic constraints over some prediction horizon. However, it suffers from certain limitations in the context of power system dynamics, related to, for example, nonlinearity and associated complexity of the dynamics, information disparity due to communication overheads and/or privacy concerns, and computational burden, especially for longer prediction horizon \cite{zhang2020distributed}. To circumvent these issues, distributed safety verification and control methods based on robust forward set-invariance principles have been proposed in \cite{kundu2019distributed,chen2019compositional,zhang2020distributed,kundu2020transient}. However, these prior works rely on the existence or the construction of parametric Lyapunov functions and/or barrier functions, thereby often incurring prohibitive computational costs and resulting in conservative safety certificates. The work in \cite{kundu2019distributed}, for example, proposes a sum-of-squares (SOS) programming based computation algorithm for distributed safety certificates as a super-level set of barrier functions. However, such computational methods result in conservative estimates of the safety-guaranteed set (e.g., Fig.\,1 in \cite{kundu2019distributed}), and typically do not scale well.

In this work, we consider a hierarchical and modular microgrid control architecture, \cite{guerrero2010hierarchical, kundu2020transient,almassalkhi2020hierarchical}, which allows system-level dispatch of power set-points to inverter-based resources, accommodating only limited data exchange between the (neighboring) inverter modules. Our objective is to design bounds on dispatched control set-points at the inverter buses, that guarantee, in a distributed sense, the safe excursions of local voltage and frequency within the specified limits while tolerating uncertainty in the neighboring buses. Specifically, the work presented in this paper relies on the \textit{Nagumo's theorem} \cite{Nagumo,invariance} to build an efficient method for distributed and robust safety verification, without requiring existence or construction of barrier functions. Thus, the proposed approach requires less computation, and relaxes the conservativeness of the barrier-certified safe sets by directly accommodating the original safety specifications.
The explicit reference governor \cite{nicotra2018explicit} relies on the same concept but we focus on establishing safe bounds on control inputs, instead of deriving a specific control law, which is the role of the grid coordinator \cite{kundu2020transient}.

The main contributions of this article are threefold. 
Firstly, we determine the maximal interval (bounds) of the dispatched control set-points guaranteeing the safety of droop-controlled inverters.
Secondly, we establish the monotonic relationship between this interval of safety admissible control set-points and the droop coefficients.
Finally, we calculate efficiently the maximal droop coefficient for which these safety admissible controls exist. As such, this paper provides novel design guidelines for the droop control parameters, extending the literature on stability-informed droop settings, e.g., \cite{schiffer2014conditions,kundu2019identifying,nandanoori2020distributed}, and the references therein. We use SOS programming to solve the safety verification problems, and illustrate the algorithm via numerical simulations. %
%
%
%
%
%
%
%
%
The remainder of this article is structured as follows. Section~\ref{sec:prelim} provides a background of the relevant theory. Section~\ref{sec:problem} presents our microgrid problem of interest. Section~\ref{sec:theory} explains the theoretical approach which we illustrate on a numerical example in Section~\ref{sec:example}. We conclude this article with Section~\ref{sec:conclusion}.

\section{Preliminaries}\label{sec:prelim}

\subsection{Invariant Sets}\label{subsec:invariant}

Consider a nonlinear dynamical system of the form
\begin{equation}\label{eq:f(x)}
   \dot x(t) = f\big( x(t) \big), \qquad x \in \mathbb{R}^n,
\end{equation}
with $f$ a Lipschitz continuous function. The objective of our work is to identify safe sets that the state $x$ cannot leave. Such sets are called \textit{invariant} (or \textit{positively invariant}); we define them as in \cite{invariance}.

\begin{definition}\label{def:invariance}
    A set $S$ is \emph{invariant} by the dynamics \eqref{eq:f(x)} if $x(0) \in S$ yields $x(t) \in S$ for all $t \geq 0$.
\end{definition}

To characterize invariant sets we will be using a theorem first established by Nagumo \cite{Nagumo} and then independently rediscovered by Brezis \cite{Brezis}. We state here a more modern formulation of this result from \cite{invariance}.

\begin{theorem}[Nagumo 1942]\label{thm:BonyBrezis}
    A closed set $S$ is invariant by the dynamics \eqref{eq:f(x)} if and only if for all $x \in S$, $f(x) \in \mathcal{C}(x)$, with $\mathcal{C}(x)$ the Bouligand tangent cone to $S$ at $x$.
\end{theorem}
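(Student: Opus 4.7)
I would split the equivalence and begin with the easy direction $(\Rightarrow)$. Given an invariant closed set $S$ and $x \in S$, the trajectory $x(t)$ of \eqref{eq:f(x)} with $x(0) = x$ satisfies $x(t) \in S$ for all $t \geq 0$. A first-order expansion $x(t_k) = x + t_k f(x) + o(t_k)$ along any sequence $t_k \downarrow 0$ produces vectors $v_k := (x(t_k) - x)/t_k \to f(x)$ with $x + t_k v_k = x(t_k) \in S$, which is exactly the defining property of $f(x) \in \mathcal{C}(x)$.

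For the sufficiency $(\Leftarrow)$, I would work with the distance function $d_S(y) := \inf_{z \in S} \|y-z\|$, which is $1$-Lipschitz, and aim to prove that $t \mapsto d_S(x(t))$ vanishes identically whenever $d_S(x(0)) = 0$. The roadmap has three stages: (i) derive a pointwise Dini-derivative inequality of the form $D^+ d_S(x(t)) \leq L \, d_S(x(t))$, with $L$ the Lipschitz constant of $f$; (ii) invoke a Gronwall-type comparison to conclude $d_S(x(t)) \equiv 0$; (iii) use closedness of $S$ to upgrade this to $x(t) \in S$ for all $t \geq 0$.

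Stage (i) is where the tangent cone hypothesis enters. At time $t$, pick a nearest projection $\hat x \in S$ of $x(t)$, so $\|x(t) - \hat x\| = d_S(x(t))$. The condition $f(\hat x) \in \mathcal{C}(\hat x)$ supplies sequences $h_k \downarrow 0$ and $v_k \to f(\hat x)$ with $\hat x + h_k v_k \in S$. Comparing $x(t+h_k) = x(t) + h_k f(x(t)) + o(h_k)$ to this admissible comparator and using $\|f(x(t)) - f(\hat x)\| \leq L \, d_S(x(t))$, the triangle inequality yields
\[
d_S\bigl(x(t + h_k)\bigr) \leq d_S\bigl(x(t)\bigr)(1 + L h_k) + h_k \|f(\hat x) - v_k\| + o(h_k),
\]
from which the sought inequality follows after dividing by $h_k$ and letting $k \to \infty$.

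The main obstacle is precisely this last step: the Bouligand tangent cone only yields liminf-type information along a single sequence $\{h_k\}$, whereas the upper Dini derivative $D^+$ is a limsup over all $h \downarrow 0$. Turning the liminf estimate into a genuine differential inequality Gronwall can digest requires additional care---either by redoing the construction at every $t$ and exploiting continuity of $f$ and of the nearest-point projection to cover a dense family of increments, or by invoking the polygonal-approximation machinery of viability theory to build an auxiliary trajectory that remains in $S$ and then using uniqueness of solutions to \eqref{eq:f(x)} to identify it with $x(\cdot)$. I expect this bridging step, rather than the individual estimates, to absorb most of the technical work.
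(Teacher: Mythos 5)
First, a point of reference: the paper does not prove Theorem~\ref{thm:BonyBrezis} at all --- it is a classical result quoted from Nagumo and Brezis via \cite{invariance} --- so there is no in-paper argument to compare yours against; your sketch is essentially the standard distance-function proof found in the set-invariance literature. The necessity direction is correct as written: the vectors $v_k = (x(t_k)-x)/t_k$ converge to $f(x)$ and satisfy $x + t_k v_k = x(t_k) \in S$, which is exactly the sequential characterization of the Bouligand cone. The sufficiency direction also has the right architecture, and your displayed estimate is the correct one (existence, though not uniqueness, of the nearest point $\hat x$ follows from closedness of $S$, and any measurable choice works since the estimate is pointwise in $t$). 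Where I would push back is on your assessment of the ``main obstacle'': you do not need to upgrade the liminf information to a bound on the upper Dini derivative $D^+$. Dividing your inequality by $h_k$ and letting $k \to \infty$ gives $D_+ g(t) \leq L\, g(t)$ for $g(t) = d_S(x(t))$, where $D_+$ denotes the \emph{lower} right Dini derivative, and this already suffices: the classical monotonicity lemma for continuous functions (if any one Dini derivate of a continuous function is nonpositive everywhere, the function is nonincreasing) applied to $t \mapsto g(t)e^{-Lt}$ yields $g(t) \leq g(0)e^{Lt} = 0$, and closedness of $S$ finishes the argument. So the bridging step you expected to absorb most of the technical work reduces to invoking that one lemma; the polygonal-approximation machinery of viability theory is a valid alternative route but is not needed. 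With that lemma made explicit, your outline closes into a complete and correct proof.
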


A full definition of the Bouligand tangent cone is given in \cite{invariance}, but we will be studying sets $S$ where the Bouligand tangent cone is $\mathbb{R}^+$ or $\mathbb{R}^-$. The geometrical interpretation of Nagumo's theorem is that $f$ pointing inside $S$ on its boundary prevents trajectories from leaving $S$.

        
        
        

\subsection{Network Safety}

In a network, the dynamics of node $i$ can be modeled by
\begin{align}\label{eq:f(x,u,w)}
   \!\!\dot x_i(t) \!=\! f_i\big( x_i(t), u_i(t), w_i(t) \big),\, x_i \!\in\! \mathbb{R}^n, \, u_i \!\in\! U, \, w_i \!\in\! W,\!\!
\end{align}
with $u_i$ a control input, $w_i$ an external input, and $U$ and $W$ their respective admissible sets. For such a system we need to adapt our definition of invariant sets following \cite{invariance}. 

\begin{definition}\label{def:robust control invariant}
    A set $S$ is \emph{robust control invariant} by the dynamics \eqref{eq:f(x,u,w)} if there exists a feedback control law $u_i(t)$ such that for all $x_i(0) \in S$ and all time-varying $w_i \in W$, $x_i(t) \in S$ for all $t \geq 0$.
\end{definition}

We then want to determine the set of control inputs $u$ ensuring the robust control invariance of a given safe set $S$ despite the fluctuations in the neighboring inverter states. In particular, specific to the example of inverter-based microgrids, i.e., when the safe sets are expressed as box constraints on the states (as in \eqref{eq:safe set}), we define the following:

\begin{definition}\label{def:upper and lower invariant}
    A 1-dimensional set $S = [\underline{s}, \overline{s}]$ is \emph{upper invariant} (resp. \emph{lower invariant}) for a set of controls $U$ by the dynamics \eqref{eq:f(x,u,w)} if for all time-varying $w \in W$, $u \in U$ and all $x(0) \in S$, then $x(t) \leq \overline{s}$ $\big( \text{resp.}\ x(t) \geq \underline{s} \big)$ for $t \geq 0$.
\end{definition}

If $S$ is upper invariant (resp. lower invariant), then the state cannot escape by crossing the upper bound (resp. lower bound) of $S$.
Notice that if there are controls making $S$ both upper and lower invariant, then $S$ is a robust control invariant set. We denote such controls as \textit{safety admissible}.

\begin{definition}\label{def:safety admissible}
    A set of controls $U$ is \emph{safety admissible} for the set $S$ if for all controls $u \in U$, the set $S$ is robust control invariant.
\end{definition}

As we will detail later, if the dynamics \eqref{eq:f(x,u,w)} and the safe set $S$ are polynomial, e.g., $S = \big\{ x \in \mathbb{R}^n : p_j(x)~\geq~0, \break j~\in~\{1,\dots,m\} \big\}$ with $p_j$ polynomials, then the invariance condition of Nagumo's theorem can be stated as a polynomial inequality, enabling its fast computation.

\section{Problem Description}\label{sec:problem}

A microgrid power network is operated by a microgrid coordinator that determines power setpoints for each node of the grid \cite{kundu2020transient,almassalkhi2020hierarchical}. The transition in between these setpoints, corresponding to a transient regime, might lead the frequency or voltage of some inverters to violate safety constraints.
We are thus interested in ensuring the transient safety of microgrid networks, so that they are reliable when operated. Consider the case of droop-controlled inverters \cite{schiffer2014conditions}:
\begin{subequations}\label{eq:inverter}
\begin{align}
    \dot \theta_i &= \omega_i,   \label{eq:theta} \\
    \tau_i \dot \omega_i &= - \omega_i + \lambda_i^p \big( P_i^{set} - P_i \big),  \label{eq:frequency} \\
    \tau_i \dot v_i &= v_i^0 - v_i + \lambda_i^q \big( Q_i^{set} - Q_i \big), \label{eq:voltage}
\end{align}
\end{subequations}
where $\theta_i$, $\omega_i$ and $v_i$ are, respectively, the phase angle, frequency and voltage magnitude of node $i$. 
The droop-coefficients $\lambda_i^p > 0$ and $\lambda_i^q > 0$ are associated with the active power vs. frequency and the reactive power vs. voltage droop curves, respectively. The time-constant of the low-pass filter used for the active and reactive power measurements is $\tau_i$. The nominal voltage magnitude is $v_i^0$. The active power and reactive power set-points are $P_i^{set}$ and $Q_i^{set}$, respectively. Finally, the active and reactive power injected into the network are $P_i$ and $Q_i$, respectively following the nonlinear coupling equations
\begin{subequations}\label{eq:P and Q}
\begin{align}
    P_i &= v_i \sum_{k \in \mathcal{N}_i} v_k ( G_{i,k} \cos \theta_{k,i} - B_{i,k} \sin \theta_{k,i} ), \label{eq:P} \\
    Q_i &= -v_i \sum_{k \in \mathcal{N}_i} v_k ( G_{i,k} \sin \theta_{k,i} + B_{i,k} \cos \theta_{k,i} ), \label{eq:Q}
\end{align}
\end{subequations}
where $\theta_{k,i} = \theta_k - \theta_i$, and $\mathcal{N}_i$ is the set of neighbor nodes with the convention that $i \in \mathcal{N}_i$. The transfer conductance and susceptance of the line connecting nodes $i$ and $k$ are denoted by $G_{i,k}$ and $B_{i,k}$, respectively.

We use the formulation of \cite{kundu2019distributed} to model the capability of the inverters to change their control set-points of the active and reactive power output in order to adjust to different operating conditions. More specifically, we write
\begin{equation}\label{eq:set-points}
    P_i^{set} = P_i^0 + u_i^p, \quad \text{and} \quad  Q_i^{set} = Q_i^0 + u_i^q,
\end{equation}
where $P_i^0$ and $Q_i^0$ are the set-points for the nominal operating conditions; and $u_i^p$ and $u_i^q$ are control inputs.
%
We are thus interested in maintaining at all times both voltage and frequency within some pre-specified safety limits.
By a usual abuse of notation, instead of considering the actual voltage, we consider its difference from the nominal voltage $v_i^0 = 1$ p.u.. Following \cite{kundu2019distributed, schneider2018improving, elizondo2016inertial}, we consider the voltage and frequency safe sets to be 
\begin{subequations}
    \begin{align}
    S_v & = [\underline{v}, \overline{v}] = [-0.4, 0.2]\, \text{p.u.}\,, \\
    S_\omega & = [\underline{\omega}, \overline{\omega}] = [-3, 3]\, \text{Hz}\, .
\end{align}\label{eq:safe set}
\end{subequations}

For the inverters \eqref{eq:inverter}, the perturbation $w_i$ from Definition~\ref{def:robust control invariant} are the neighbor voltage magnitudes ($v_k$) and phase angle differences ($\theta_{k,i}$) that determine the power transfers, $P_i$ and $Q_i$\,. For the purpose of this paper, we assume that the phase angle differences between (any) two neighbor buses are bounded as follows:
\begin{align}
    \theta_{i,k} \in S_\theta = \left[-{\pi}/{6}, {\pi}/{6}\right]\quad\forall k\in\mathcal{N}_i\,.
\end{align}
Note that such range of phase angle differences are typical of distribution feeders, and especially microgrids, that are often characterized by relatively short lines connecting two buses \cite{kersting1991radial}. Furthermore, note that only the difference of phase angles $\theta_i$ and $\theta_k$ as opposed to their individual values, determine the power-flow connecting the nodes $i$ and $k$\,. Thus, for simplicity to notations and without any loss of generality, we set $\theta_i\equiv 0$ as the \textit{reference angle}, and use $\theta_{k,i}\equiv \theta_k$ throughout this text.
Then, we want to determine controls $u_i=\left(u_i^p,\,u_i^q\right)$ that maintain $v_i \in S_v$ and $\omega_i \in S_\omega$, whatever the values of $\theta_k \in S_\theta$ and $v_k \in S_v$ for the neighbors $k \in \mathcal{N}_i$.

\begin{problem}\label{prob:controls}
    \textit{(Safety-Admissible Control)} What values of control set-points $u_i$ ensure that $S_v$ and $S_\omega$ are robust control invariant by the dynamics \eqref{eq:inverter}?
\end{problem}

Moreover, note that the impact of neighbor (and network) disturbances on the inverter dynamics \eqref{eq:inverter} are enhanced by larger droop values, a fact which suggests the existence and size of the \textit{safety-admissible} controls depend on the chosen droop coefficients. This motivates the following question:

\begin{problem}\label{prob:droop}
    \textit{(Maximal Droop)} What values of droop coefficients $\left(\lambda_i^p,\lambda_i^q\right)$ ensure the existence of a non-empty safety-admissible set of controls, guaranteeing robust control invariance of $S_v$ and $S_\omega$ as per dynamics \eqref{eq:inverter}?
\end{problem}

\section{Theoretical Construction}\label{sec:theory}

In this section we establish the theory concerning robust control invariant sets for droop-controlled inverters.
%
%
%
%
Since $\lambda_i^p > 0$ and $\lambda_i^q > 0$ in the inverter dynamics \eqref{eq:inverter}, we can define a \emph{minimal lower control} $\underline{u}$ and a \emph{maximal upper control} $\overline{u}$ such that
\begin{align*}
    \underline{u} &= \inf\big\{ u_{low} \in \mathbb{R} : S\ \text{is lower invariant for all}\ u \geq u_{low} \big\}, \\
    \overline{u} &= \sup\big\{ u_{up} \in \mathbb{R} : S\ \text{is upper invariant for all}\ u \leq u_{up} \big\}.
\end{align*}
If $\underline{u} \leq \overline{u}$, then the maximal interval of safety admissible controls is $[\underline{u}, \overline{u}]$.
To illustrate these definitions and our objective let us study a simplified version of \eqref{eq:inverter}.

\subsection{Simple Example}\label{subsec:example}

Consider the frequency dynamics equation \eqref{eq:frequency} when the inverter has a single neighbor. To simplify take $\tau = 1$s, $\lambda^p = 1$rad/s/p.u., $\theta_2 = 0$rad, $P^0 = 0$p.u., $G = -2$p.u.$^{-1}$ and a safe set $S = [-1, 1]$Hz. Then, $\dot \omega = - \omega + 2\omega \omega_2 + u$.

To make $S$ upper invariant, according to Nagumo's theorem we need $\dot \omega \leq 0$ when $\omega = 1$. Thus, we want $2\omega_2-1 + u \leq 0$, so $u \leq 1 -2\omega_2$. We are looking for robust controls $u$ working for all possible $\omega_2 \in S$. Then, the maximal upper control is $\overline{u} = -1$, because for $\omega_2 = 1$, we need $u \leq -1$.

Similarly, to make $S$ lower invariant, we need $\dot \omega \geq 0$ when $\omega = -1$. Then, we want $-2\omega_2 +1 + u \geq 0$, i.e., $u \geq -1 + 2\omega_2$. Thus, the minimal lower control is $\underline{u} = 1$.

In this setting, $\overline{u} < \underline{u}$, there are no safety admissible controls making $S$ robust control invariant. A reason for this failure is that $\lambda^p$ is too large, making $\omega$ unstable. Indeed, for small values of $\lambda^p$, \eqref{eq:frequency} can be approximated by $\tau_i \dot \omega_i = -\omega_i$ which is stable. We will elaborate further on this issue in the following subsections.

\subsection{Minimal controls for upper and lower invariance}

Let us determine the minimal lower control for $S_\omega$.
By definition, $\underline{u_i^p} = \underset{\theta_k, v_k}{\inf} \{ u_i^p : \dot \omega_i \geq 0,\ \omega_i = \underline{\omega} \}$. With \eqref{eq:frequency}, $\underline{u_i^p} = \inf \big\{ u_i^p : u_i^p \geq \frac{1}{\lambda_i^p} \underline{\omega} + P_i - P^0_i,\ \forall \theta_k, v_k \big\}$. Then, the minimal $u_i^p$ that is larger than all $\big[\frac{1}{\lambda_i^p} \underline{\omega} + P_i - P^0_i\big] (\theta_k, v_k)$ is in fact the maximum of this term over all $\theta_k$ and $v_k$, because $P_i$ is continuous in $\theta_k, v_k$ according to \eqref{eq:P} and, $S_\theta$ and $S_v$ are compact. Thus,
\begin{align}\label{eq:min lower}
    &\underline{u^p_i} =  \underset{\theta_k, v_k}{\max} \ \frac{1}{\lambda_i^p}\underline{\omega} + P_i - P_i^0,\\
    & \text{subject to}\quad \theta_k \in S_\theta,\ v_k \in S_v,\ k \in \mathcal{N}_i, \nonumber
\end{align}
Then, $\dot \omega_i \geq 0$ when $\omega_i = \underline{\omega}$ and $u_i^p \geq \underline{u^p_i}$ for all $\theta \in S_\theta$ and $v \in S_v$, which guarantees lower invariance according to Nagumo's theorem. Similarly, the maximal upper control is defined as
\begin{align}\label{eq:max upper}
   & \overline{u^p_i} =  \underset{\theta_k, v_k}{\min} \ \frac{1}{\lambda_i^p}\overline{\omega} + P_i - P_i^0,\\
    & \text{subject to}\quad \theta_k \in S_\theta,\ v_k \in S_v,\ k \in \mathcal{N}_i, \nonumber
\end{align}
and makes $\dot \omega_i \leq 0$ when $\omega_i = \overline{\omega}$ and $u_i^p \leq \overline{u^p_i}$ for all $\theta \in S_\theta$ and $v \in S_v$. The minimal lower control $\underline{u_i^q}$ and maximal upper control $\overline{u_i^q}$ for the voltage equation are defined similarly.

A great way to solve the optimization problems \eqref{eq:min lower} and \eqref{eq:max upper} is to use a sum-of-squares (SOS) optimization.
A multivariate polynomial $p(x)$, $x \in \mathbb{R}^n$, is an SOS if there exist polynomial functions $h_i(x)$, $i = 1, \hdots, s$ such that $p(x) = \sum_{i=1}^s h_i^2(x)$.
However, the power-flow equations \eqref{eq:P and Q} are not polynomials. Following \cite{kundu2019distributed} we choose a third order Taylor expansion of the dynamics \eqref{eq:P and Q} to make \eqref{eq:inverter} polynomial.
Then, \eqref{eq:min lower} and \eqref{eq:max upper} can be solved with any SOS tool.

\subsection{Maximal droop for robust control invariance}

The stability of the voltage and frequency rely on small droop coefficients $\lambda$. As in Section~\ref{subsec:example}, when $\lambda$ increases in \eqref{eq:inverter}, the perturbations $P$ and $Q$ increase, and we thus have the intuition that the set of safety admissible controls should shrink. We can actually prove a stronger result by using the fact that the optimizations in \eqref{eq:min lower} and \eqref{eq:max upper} only affect $P_i$.

\vspace{1mm}

\begin{proposition}\label{prop:monotonic range}
    The bounds of the interval of safety admissible controls $[\underline{u}, \overline{u}]$ for $S_v$ and $S_\omega$ are inversely proportional to the droop coefficient $\lambda$.
\end{proposition}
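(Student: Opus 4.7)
The plan is to write each of the four bounds $\underline{u_i^p}$, $\overline{u_i^p}$, $\underline{u_i^q}$, $\overline{u_i^q}$ explicitly as an affine function of $1/\lambda$ in which the constant term does not depend on the droop coefficient at all. The inverse proportionality then drops out immediately from the decomposition.

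First I would return to the characterizations \eqref{eq:min lower} and \eqref{eq:max upper}. The key observation is that the term $\underline{\omega}/\lambda_i^p$ (respectively $\overline{\omega}/\lambda_i^p$) appearing inside the max/min is constant with respect to the decision variables $\theta_k,v_k$. Moreover, the power-flow expression \eqref{eq:P and Q} shows that $P_i$ depends only on the neighbor states and the line admittances, and \eqref{eq:set-points} fixes the nominal value $P_i^0$; neither involves $\lambda_i^p$. Factoring the constant out of the optimization yields
\begin{align*}
    \underline{u_i^p}(\lambda_i^p) &= \frac{\underline{\omega}}{\lambda_i^p} + \max_{\theta_k\in S_\theta,\, v_k\in S_v}\bigl(P_i - P_i^0\bigr),\\
    \overline{u_i^p}(\lambda_i^p) &= \frac{\overline{\omega}}{\lambda_i^p} + \min_{\theta_k\in S_\theta,\, v_k\in S_v}\bigl(P_i - P_i^0\bigr),
\end{align*}
in which the max/min terms are completely independent of $\lambda_i^p$. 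The $\lambda_i^p$-dependent piece of each bound is therefore exactly a constant divided by $\lambda_i^p$, i.e.\ inversely proportional to the droop coefficient. Since $\underline{\omega}<0<\overline{\omega}$, increasing $\lambda_i^p$ raises $\underline{u_i^p}$ and lowers $\overline{u_i^p}$, confirming the shrinking of the admissible interval that motivates Problem~\ref{prob:droop}.

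The same decomposition would apply verbatim to the voltage equation \eqref{eq:voltage}: after the implicit change of coordinates built into the safety specification \eqref{eq:safe set}, \eqref{eq:voltage} has the same structure as \eqref{eq:frequency}, with $Q_i$ and $Q_i^0$ taking the roles of $P_i$ and $P_i^0$ and likewise not depending on $\lambda_i^q$. I do not anticipate a real technical obstacle here; the argument is essentially careful bookkeeping. The only point that really deserves attention is verifying that $P_i - P_i^0$ and $Q_i - Q_i^0$ truly carry no hidden dependence on the droop coefficients, which is immediate from \eqref{eq:P and Q} and \eqref{eq:set-points}, and that the max/min over the compact constraint set are well-posed, which is the continuity argument already spelled out in the paragraph preceding \eqref{eq:min lower}.
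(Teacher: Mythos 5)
Your proposal is correct and follows essentially the same route as the paper: both factor the $\lambda$-independent extremization of the power-flow terms out of \eqref{eq:min lower} and \eqref{eq:max upper}, so that each bound becomes a constant plus a $\lambda$-independent constant divided by the droop coefficient (your $\max_{\theta_k,v_k}(P_i-P_i^0)$ is exactly the paper's $P_i^{max}-P_i^0$). The added remark that $\underline{\omega}<0<\overline{\omega}$ makes the interval shrink as $\lambda$ grows is a useful observation the paper defers to Proposition~\ref{prop:0 in interior}, but it does not change the argument.
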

\begin{proof}
    We first introduce the maximum and minimum of the active and reactive powers \eqref{eq:P and Q} at the lower and upper bounds of $S_\omega$ and $S_v$. More specifically
    \begin{align*}
        P_i^{max} &= \max\big\{ P_i : \omega_i = \underline{\omega},\ (\theta_k, v_k) \in S_\theta \times S_v,\ k \in \mathcal{N}_i \big\}, \\
        P_i^{min} &= \min\big\{ P_i : \omega_i = \overline{\omega},\ (\theta_k, v_k) \in S_\theta \times S_v,\ k \in \mathcal{N}_i \big\}, \\
        Q_i^{max} &= \max\big\{ Q_i : v_i = \underline{v},\ (\theta_k, v_k) \in S_\theta \times S_v,\ k \in \mathcal{N}_i \big\}, \\
        Q_i^{min} &= \min\big\{ Q_i : v_i = \overline{v},\ (\theta_k, v_k) \in S_\theta \times S_v,\ k \in \mathcal{N}_i \big\}.
    \end{align*}
    Then, \eqref{eq:min lower} and \eqref{eq:max upper} simplify to
    \begin{equation}\label{eq:safety admissible with P^max}
        \underline{u_i^p}(\lambda_i^p) = \frac{1}{\lambda_i^p}\underline{\omega} + P_i^{max} \hspace{-1mm} - P_i^0,\ \overline{u_i^p}(\lambda_i^p) = \frac{1}{\lambda_i^p}\overline{\omega} + P_i^{min} \hspace{-1mm} - P_i^0.
    \end{equation}
    A similar result holds for the voltage.
\end{proof}

\vspace{1mm}

Building on this result, we can then establish a sufficient condition for the existence of safety admissible controls.

\vspace{1mm}

\begin{proposition}\label{prop:0 in interior}
    If the safe set contains $0$ in its interior, then safety admissible controls exist for some droop coefficients.
\end{proposition}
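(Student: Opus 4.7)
The plan is to use the explicit parameterization of $\underline{u_i^p}$ and $\overline{u_i^p}$ from \eqref{eq:safety admissible with P^max} in Proposition~\ref{prop:monotonic range} and show that sufficiently small droop coefficients render the safety admissible interval non-empty.

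First I would compute the width of the frequency interval by subtracting the two expressions in \eqref{eq:safety admissible with P^max}:
\begin{equation*}
\overline{u_i^p}(\lambda_i^p) - \underline{u_i^p}(\lambda_i^p) = \frac{\overline{\omega} - \underline{\omega}}{\lambda_i^p} - \bigl(P_i^{max} - P_i^{min}\bigr).
\end{equation*}
The constant $P_i^{max} - P_i^{min} \geq 0$ is finite because the continuous function $P_i$ of \eqref{eq:P} is optimized over the compact set $S_\theta \times S_v$. The hypothesis that $0$ lies in the interior of $S_\omega$ forces $\overline{\omega} - \underline{\omega} > 0$, so the first term grows without bound as $\lambda_i^p \to 0^+$. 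Consequently, whenever $\lambda_i^p (P_i^{max} - P_i^{min}) < \overline{\omega} - \underline{\omega}$, which is achievable for a small enough $\lambda_i^p > 0$, the interval $[\underline{u_i^p}, \overline{u_i^p}]$ is non-empty.

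An identical argument, applied to the voltage dynamics \eqref{eq:voltage} and the analogous bounds $\underline{u_i^q}, \overline{u_i^q}$, exploits the fact that $0$ lies in the interior of $S_v$ to produce a non-empty interval $[\underline{u_i^q}, \overline{u_i^q}]$ for every sufficiently small $\lambda_i^q > 0$. Since $u_i^p$ and $u_i^q$ enter \eqref{eq:inverter} independently, taking both droop coefficients below their respective thresholds exhibits a non-empty Cartesian product of admissible control set-points, establishing the existence of safety admissible controls.

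I do not expect a genuine obstacle: the proposition is essentially a corollary of Proposition~\ref{prop:monotonic range} combined with the compactness of the disturbance sets. The only minor subtlety is handling the degenerate case $P_i^{max} = P_i^{min}$ (and its voltage analogue), in which any positive droop works, separately from the generic case producing a finite upper bound on the admissible droop.
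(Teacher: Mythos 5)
Your proof is correct and follows essentially the same route as the paper: substitute the closed-form bounds from \eqref{eq:safety admissible with P^max}, note that $P_i^{max}-P_i^{min}$ (resp.\ $Q_i^{max}-Q_i^{min}$) is a finite constant by continuity and compactness, and let the $1/\lambda$ term dominate as $\lambda \to 0^+$. The only cosmetic difference is that the paper separately treats the frequency case, where $P_i^{max} \geq P_i^{min}$ holds a priori because the two optimizations share the same constraint set, from the voltage case, where the sign of $Q_i^{max}-Q_i^{min}$ is undetermined --- a distinction your explicit width computation renders unnecessary, since only finiteness of that constant is actually used.
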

\begin{proof}
    Note that $\omega_i$ does not intervene in $P_i$ \eqref{eq:P}, so the constraints for $P_i^{min}$ and $P_i^{max}$ are the same, which leads to $P_i^{min} < P_i^{max}$. 
    Then, based on \eqref{eq:safety admissible with P^max}, the condition $\underline{\omega} < 0 < \overline{\omega}$ is necessary and sufficient to make $\underline{u_i^p}(\lambda_i^p) < \overline{u_i^p}(\lambda_i^p)$ for $\lambda_i^p$ small enough. 

    On the other hand, since $v_i$ intervenes in $Q_i$ \eqref{eq:Q}, we cannot order $Q_i^{min}$ and $Q_i^{max}$ without computing them. Thus the condition $\underline{v} < 0 < \overline{v}$ is sufficient but maybe not necessary to make $\underline{u_i^q}(\lambda_i^q) < \overline{u_i^q}(\lambda_i^q)$ for some $\lambda_i^q$.
\end{proof}

\vspace{1mm}

Then, the safe sets of \eqref{eq:safe set} guarantee that safety admissible controls exist for some small enough droop coefficients.
We now want to find the maximal droop coefficient $\lambda^*$ for which safety admissible controls exist, i.e.,
\begin{equation}\label{eq:lambda*}
    \lambda^* = \max \big\{ \lambda \geq 0 : \underline{u} (\lambda) \leq \overline{u} (\lambda) \big\}.
\end{equation}
\begin{remark}
Note that the problem of identifying the maximal droop for stability analysis is relatively well studied in the literature (see, for instance, \cite{schiffer2014conditions,kundu2019identifying,nandanoori2020distributed}). However, as a novel contribution of this paper, we propose a method to identify maximal droop values for safety verification.
\end{remark} 
According to Proposition~\ref{prop:monotonic range} and \ref{prop:0 in interior}, if $0$ is in the interior of $S$ and $\lambda \leq \lambda^*$, then the interval $[\underline{u}(\lambda), \overline{u}(\lambda)]$ is not empty, is proportional with $1/\lambda$ and makes $S$ robust control invariant,.

\vspace{1mm}

\begin{proposition}\label{prop:lambda*}
    The maximal droop coefficient $\lambda^*$ is
    \begin{equation}\label{eq:lambda* with P and Q}
        \lambda_i^{p*} = \frac{\overline{\omega} - \underline{\omega}}{P_i^{max} - P_i^{min}} \quad \text{and} \quad \lambda_i^{q*} = \frac{\overline{v} - \underline{v}}{Q_i^{max} - Q_i^{min}}.
    \end{equation}
\end{proposition}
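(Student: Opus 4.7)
The plan is to exploit the closed-form expressions \eqref{eq:safety admissible with P^max} (and its voltage analog) established in Proposition~\ref{prop:monotonic range}, together with the sign conditions $\underline{\omega}<0<\overline{\omega}$ and $\underline{v}<0<\overline{v}$ that are built into the safe sets \eqref{eq:safe set}. Because $\underline{\omega}/\lambda_i^p$ is negative and $\overline{\omega}/\lambda_i^p$ is positive, the map $\lambda_i^p \mapsto \underline{u_i^p}(\lambda_i^p)$ is strictly increasing, while $\lambda_i^p \mapsto \overline{u_i^p}(\lambda_i^p)$ is strictly decreasing. Hence the feasible set $\{\lambda_i^p>0:\underline{u_i^p}(\lambda_i^p)\leq\overline{u_i^p}(\lambda_i^p)\}$ is an interval of the form $(0,\lambda_i^{p*}]$ whose right endpoint, matching the definition in \eqref{eq:lambda*}, is the unique solution of the equality $\underline{u_i^p}(\lambda_i^p)=\overline{u_i^p}(\lambda_i^p)$.

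Substituting \eqref{eq:safety admissible with P^max} into this equality makes the $P_i^0$ terms cancel and yields $(\underline{\omega}-\overline{\omega})/\lambda_i^p = P_i^{min}-P_i^{max}$, which rearranges to the stated formula $\lambda_i^{p*} = (\overline{\omega}-\underline{\omega})/(P_i^{max}-P_i^{min})$. The formula is well-defined because, as already noted in the proof of Proposition~\ref{prop:0 in interior}, $\omega_i$ does not appear in \eqref{eq:P}, so both $P_i^{max}$ and $P_i^{min}$ are extrema of the same function over the same compact set $S_\theta\times S_v$, guaranteeing $P_i^{max}\geq P_i^{min}$ and $\lambda_i^{p*}\geq 0$.

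For the reactive-power half, I would first derive the voltage analog of \eqref{eq:safety admissible with P^max} by applying Nagumo's condition to \eqref{eq:voltage} at $v_i=\overline{v}$ and $v_i=\underline{v}$ and then extremizing over $(\theta_k,v_k)\in S_\theta\times S_v$, obtaining
\begin{equation*}
    \underline{u_i^q}(\lambda_i^q) = \tfrac{\underline{v}}{\lambda_i^q} + Q_i^{max} - Q_i^0, \quad \overline{u_i^q}(\lambda_i^q) = \tfrac{\overline{v}}{\lambda_i^q} + Q_i^{min} - Q_i^0.
\end{equation*}
The same monotonicity-and-crossing argument then produces $\lambda_i^{q*}=(\overline{v}-\underline{v})/(Q_i^{max}-Q_i^{min})$. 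The main obstacle is that, unlike the active-power case, $Q_i^{max}$ and $Q_i^{min}$ are defined under \emph{different} voltage pins ($v_i=\underline{v}$ versus $v_i=\overline{v}$, since $v_i$ enters \eqref{eq:Q}), so the strict ordering $Q_i^{max}>Q_i^{min}$ is not automatic. The proposition is implicitly restricted to the regime in which that ordering holds; in the degenerate regime $Q_i^{max}\leq Q_i^{min}$, one would have $\underline{u_i^q}(\lambda_i^q)\leq\overline{u_i^q}(\lambda_i^q)$ for every $\lambda_i^q>0$, so $\lambda_i^{q*}$ should be interpreted as $+\infty$, consistent with a vanishing (or negative) denominator in the formula.
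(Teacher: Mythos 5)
Your proof is correct and follows essentially the same route as the paper: substitute the closed-form expressions \eqref{eq:safety admissible with P^max} (and their voltage analogs), observe that the maximal droop is attained where $\underline{u}(\lambda)=\overline{u}(\lambda)$, and solve for $\lambda$. Your explicit monotonicity argument (lower bound increasing, upper bound decreasing in $\lambda$) is in fact a slight strengthening of the paper's appeal to mere continuity, and your remark on the degenerate case $Q_i^{max}\leq Q_i^{min}$ is a reasonable addition the paper does not address.
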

\begin{proof}
    Since $\underline{u_i^p}$ and $\overline{u_i^p}$ are continuous in $\lambda_i^p$ according to \eqref{eq:safety admissible with P^max}, definition \eqref{eq:lambda*} leads to $\underline{u_i^p}(\lambda_i^{p*}) = \overline{u_i^p}(\lambda_i^{p*})$. Using \eqref{eq:safety admissible with P^max} we easily obtain the announced expression for $\lambda_i^{p*}$. The calculation of $\lambda_i^{q*}$ is similar.
\end{proof}

\vspace{1mm}


The term $P^{max} - P^{min}$ in the denominator validates our intuition that increasing the range of possible perturbations decreases $\lambda^{p*}$. The larger $\overline{\omega} - \underline{\omega}$ is, the larger $\omega$ can be, and thus the stronger the stabilizing term $-\omega$ is in \eqref{eq:frequency}, which increases $\lambda^{p*}$. The same analysis holds for $\lambda^{q*}$.

\begin{remark}
There are now two approaches to answer Problem~\ref{prob:controls} based on the controls $\underline{u}$ and $\overline{u}$. If $\lambda \leq \lambda^*$, then $\underline{u} \leq \overline{u}$ and any control in between guarantees the robust control invariance of the safe set. On the other hand, if $\lambda > \lambda^*$ we need a state-dependent control law. When the state gets too close from the upper (resp. lower) bound of its safe set, applying $\overline{u}$ (resp. $\underline{u}$) prevents safety violation. Besides, $\underline{u}$ and $\overline{u}$ can be precomputed, so that the sole real-time action of the controller is to decide which one of the controls to apply. 
\end{remark}

\section{Numerical Analysis}\label{sec:example}

In this section we apply our theory to a modified microgrid model \cite{ersal2011impact}, and demonstrate the robust control invariance of the safe sets $S_v$ and $S_\omega$ introduced in \eqref{eq:safe set}. As per the modifications in \cite{kundu2019distributed,kundu2020transient}, four inverters were placed in the network, three of those at buses 1,\,4, and 5, and the fourth at bus 0 after disconnecting the utility for islanded operation.

In a microgrid, the distance between inverters is small and thus the states of neighbors are strongly coupled. To account for this physical phenomenon we introduce two constants $\Delta_v$ and $\Delta_\omega$ measuring the range of allowable uncertainty of neighboring inverters such that $v_k \in [v_i - \Delta_v, v_i + \Delta_v]$ and $\omega_k \in [\omega_i - \Delta_\omega, \omega_i + \Delta_\omega]$ for $k \in \mathcal{N}_i$. For the numerical analysis we choose $\Delta_v = 0.02$p.u., which is $2\%$ of the nominal voltage $v_i^0$, and $\Delta_\omega = 0.12$Hz, which is $2\%$ of the $6$Hz range of $S_\omega$. The nominal droop coefficients of the network are $\lambda^p = 2.51$rad/s/p.u. and $\lambda^q = 0.2$p.u./p.u..

The coupling adds a constraint to the calculation of $Q_i^{min}$ and $Q_i^{max}$. For instance, $Q_i^{max} = \max\ Q_i(\theta_k, v_k)$ subject to $v_i = \underline{v}$, $\theta_k \in S_\theta$, $v_k \in S_v \cap [v_i - \Delta_v, v_i + \Delta_v] = [\underline{v}, \underline{v} + \Delta_v]$ for $k \in \mathcal{N}_i$. We compute $Q_2^{min}$, $Q_2^{max}$ with an SOS algorithm and we use the calculations of Proposition~\ref{prop:monotonic range} to represent $\underline{u_2^q}$ and $\overline{u_2^q}$ on Figure~\ref{fig:range_v2}.

\begin{figure}[htbp!]
    \centering
    \includegraphics[scale=0.4]{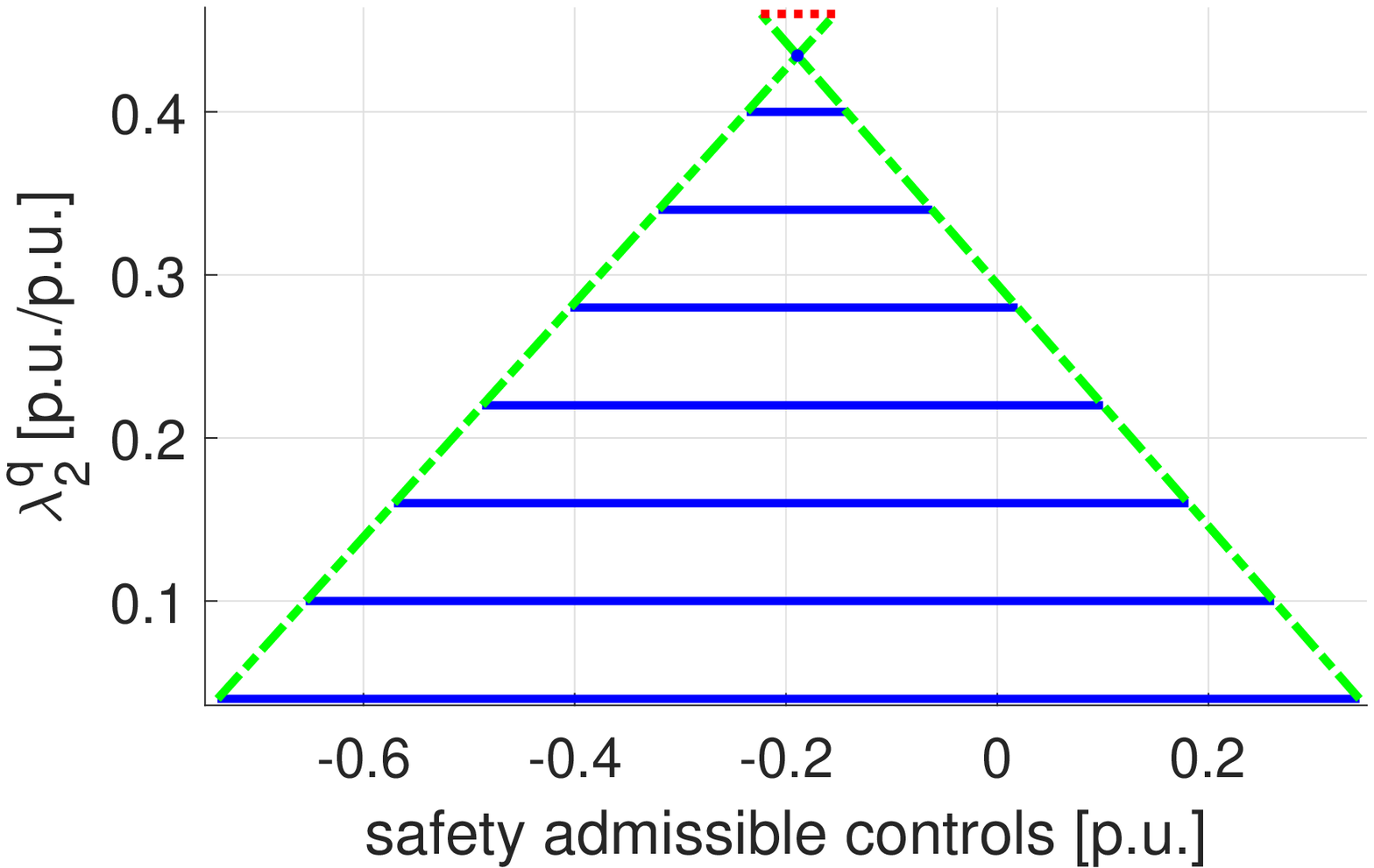}
    \caption{Evolution of the interval of safety admissible controls in blue for the voltage of node 2. The blue dot corresponds to $\lambda_2^{q*}$. The green dash-dot lines correspond to $\underline{u_2^q}(\lambda_2^q)$ and $\overline{u_2^q}(\lambda_2^q)$. The red dotted line indicates that $\underline{u_2^q} > \overline{u_2^q}$.}
    \label{fig:range_v2}
\end{figure}

\begin{figure}[htbp!]
    \centering
    \includegraphics[scale=0.4]{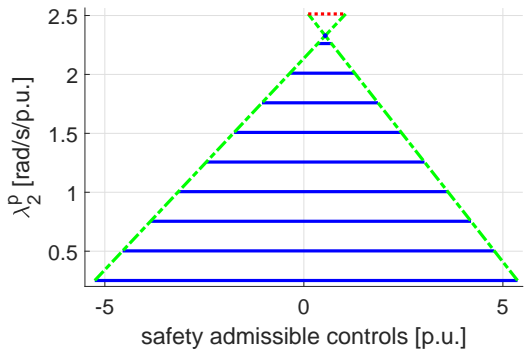}
    \caption{Evolution of the interval of safety admissible controls in blue for the frequency of node 2. The blue dot corresponds to $\lambda_2^{p*}$. The green dash-dot lines correspond to $\underline{u_2^p}(\lambda_2^p)$ and $\overline{u_2^p}(\lambda_2^p)$. The red dotted line indicates that $\underline{u_2^p} > \overline{u_2^p}$.}
    \label{fig:range_omega2}
\end{figure}

As proven in Proposition~\ref{prop:0 in interior}, since $0$ is in the interior of $S_v$, we were able to find values of $\lambda_2^q$ for which safety admissible controls exist. On Figure~\ref{fig:range_v2}, $\lambda_2^{q*}$ is located at the intersection of the green dash-dot lines representing $\underline{u_2^q}(\lambda_2^q)$ and $\overline{u_2^q}(\lambda_2^q)$.
Figure~\ref{fig:range_omega2} shows that the same is true for $\lambda_2^p$.
The computed values of $\lambda^*$ are gathered in Table~\ref{tab:lambda*}.

\renewcommand{\arraystretch}{1.2}
\begin{table}[htbp!]
    \centering
    \begin{tabular}{c|cccc}
        Inverter & 1 & 2 & 3 & 4 \\ \hline
        $\lambda^{p*}$ [rad/s/p.u.] & $1.227$ & $2.329$ & $0.875$ & $1.368$ \\
        $\lambda^{q*}$ [p.u./p.u.] & $0.228$ & $0.434$ & $0.161$ & $0.241$
    \end{tabular}
    \caption{Maximal droop coefficients $\lambda^*$ for which safety admissible controls exist.}
    \label{tab:lambda*}
\end{table}

Note that for the voltage $\lambda^{q*}$ ranges from $80\%$ to $217\%$ of the nominal $\lambda^q$ depending on the inverter. For the frequency, $\lambda^{p*}$ ranges from $35\%$ to $93\%$ of the nominal $\lambda^p$.

We now study how the admissible range of states $\Delta_v$, $S_\theta$ and $S_v$ impact the maximal droop coefficient $\lambda^{q*}$.

\begin{figure}[htbp!]
    \centering
    \includegraphics[scale=0.4]{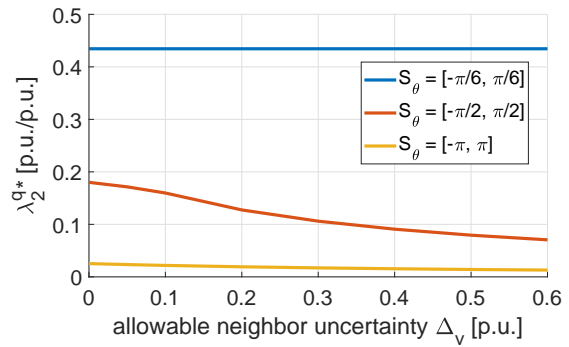}
    \caption{Evolution of $\lambda_2^{q*}$ as a function of the allowable neighbor uncertainty $\Delta_v$ and of the range of admissible phase angle $S_\theta$.}
    \label{fig:lambda_star_theta}
\end{figure}

In Figures~\ref{fig:lambda_star_theta} and \ref{fig:lambda_star_v}, $\Delta_v = 0.6$p.u. depicts a lack of coupling between inverters, because the length of $S_v$ is at most $0.6$p.u., while for $\Delta_v = 0$p.u. the coupling is perfect, i.e., $v_k = v_i$ for $k \in \mathcal{N}_i$.

\begin{figure}[htbp!]
    \centering
    \includegraphics[scale=0.42]{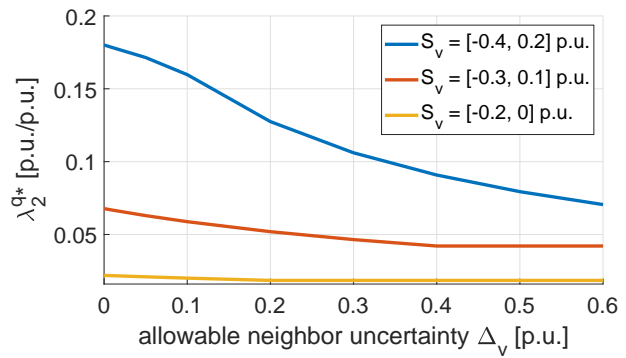}
    \caption{Evolution of $\lambda_2^{q*}$ as a function of the allowable neighbor uncertainty $\Delta_v$ and of the range of admissible voltage $S_v$.}
    \label{fig:lambda_star_v}
\end{figure}

As predicted with \eqref{eq:lambda* with P and Q} and illustrated on Figures~\ref{fig:lambda_star_theta} and \ref{fig:lambda_star_v}, as the uncertainty ranges $\Delta_v$ and $S_\theta$ increase, the value of $\lambda^{q*}$ decreases.
The impact of the length of $S_v$ on $\lambda^{q*}$ is more difficult to predict as it affects both the numerator and denominator of $\lambda^{q*}$ in \eqref{eq:lambda* with P and Q}. However, the verdict of Figure~\ref{fig:lambda_star_v} is clear: enlarging $S_v$ increases $\lambda^{q*}$, the voltage has more wiggle room inside its safe set, it is thus easier to maintain $v \in S_v$. 

\vspace{2mm}

We compute the controls $\underline{u^p_1}$ and $\overline{u_1^p}$ on an Intel Core i7-4770S with a CPU at 3.1GHz and 8GB of RAM.
Previous works \cite{kundu2019distributed, kundu2019identifying} relied on the widely used MATLAB toolbox SOSTOOLS \cite{sostools} and the semi-definite programming (SDP) solver SeDuMi \cite{Sedumi}. However, the computation times were often excessive for nodes with more than two neighbors. We thus consider the Julia language \cite{Julia}, its SOS toolbox \cite{sosJulia} and the SDP solvers SDPA \cite{SDPA} and Mosek \cite{Mosek}.
As expected, Julia is faster and the computation times can be reduced by two orders of magnitude as shown on Table~\ref{tab:speed}.

\renewcommand{\arraystretch}{1.5}
\begin{table}[htbp!]
    \centering
    \begin{tabular}{c|ccc}
        Language & MATLAB & Julia & Julia \\
        SDP solver & SeDuMi & SDPA & Mosek \\ \hline
        Run-time for $\underline{u^p_1}$ and $\overline{u_1^p}$ &  4295s & 343s & 33s \\ 
    \end{tabular}
    \caption{Run-time comparison between implementations.}
    \label{tab:speed}
\end{table}

In order to illustrate the invariance of the safe sets with the safety admissible controls calculated, we simulate the evolution of the voltage and the frequency of node 1. We use the original non-polynomial dynamics of the system. 
We choose $\lambda_1^p$ and $\lambda_1^q$ at respectively $40\%$ and $100\%$ of their nominal values, so that $\lambda_1^p < \lambda_1^{p*}$ and $\lambda_1^q < \lambda_1^{q*}$. Then, the intervals of safety admissible controls are $U_\omega = [-0.724, 1.571]$p.u. and $U_v = [-0.25, -0.0937]$p.u..

Every second we randomly choose a control value $u_1^q \in U_v$ and $u_1^p \in U_\omega$ as shown on Figures~\ref{fig:simulation}(\subref{fig:controls_voltage}) and \ref{fig:simulation}(\subref{fig:controls_frequency}).
Similarly, the states $\omega_k \in S_\omega \cap [\omega_1 - \Delta_\omega, \omega_1 + \Delta_\omega]$ and $v_k \in S_v \cap [v_1 - \Delta_v, v_1 + \Delta_v]$ of the neighboring nodes are stochastically updated every $10$ms as depicted on Figures~\ref{fig:simulation}(\subref{fig:neighbors_omega}) and \ref{fig:simulation}(\subref{fig:neighbors_v}). 
The evolution of the voltage and frequency of node 1 are then pictured on Figures~\ref{fig:simulation}(\subref{fig:v1}) and \ref{fig:simulation}(\subref{fig:w1}).

\begin{figure*}[htbp!]
\centering
\begin{subfigure}[b]{0.33\textwidth}
    \includegraphics[scale=0.4]{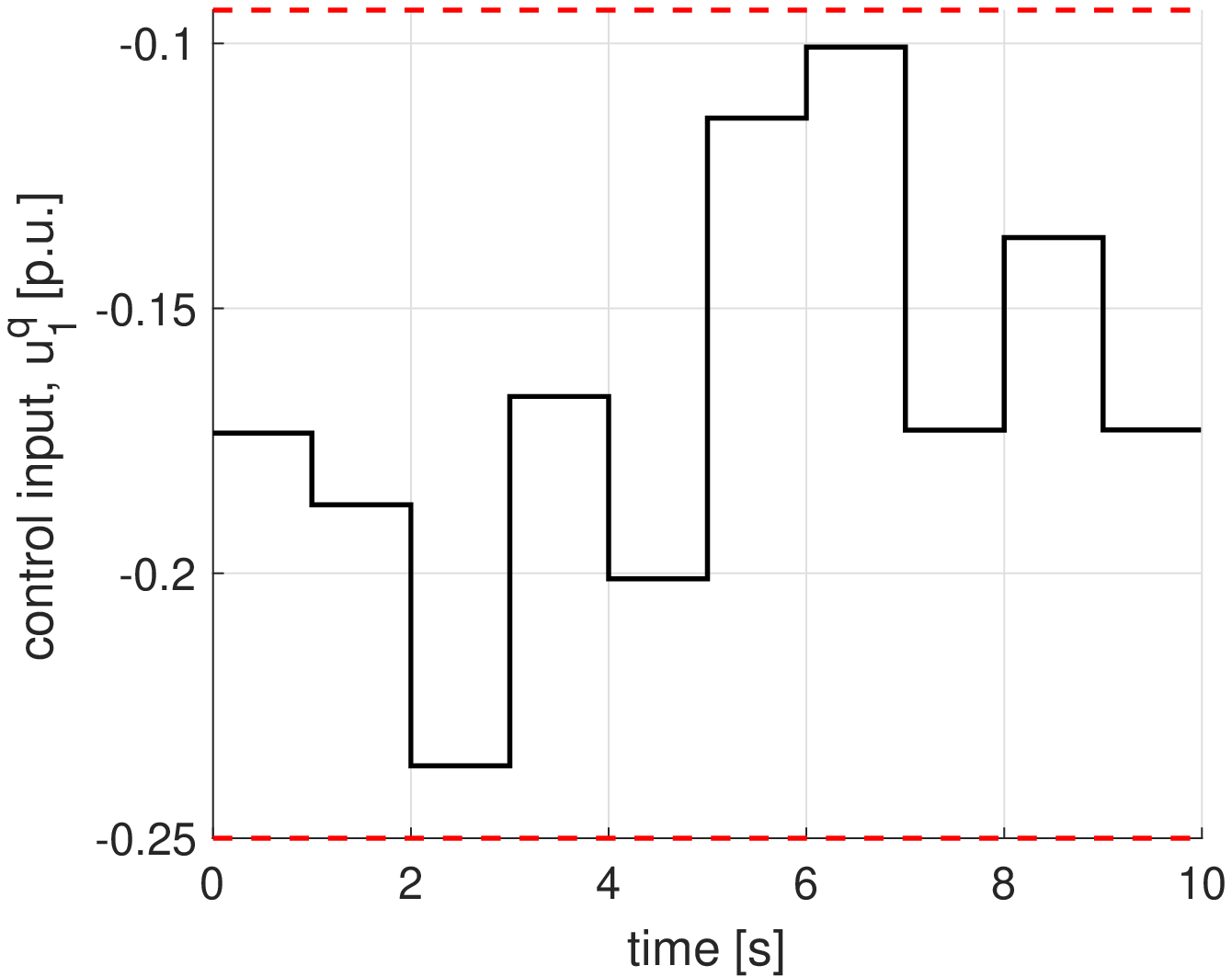}
    \caption{Stochastic choice of $u_1^q \in U_v$, where $U_v$ is the interval of safety admissible controls.}
    \label{fig:controls_voltage}
\end{subfigure}
\begin{subfigure}[b]{0.33\textwidth}
    \includegraphics[scale=0.4]{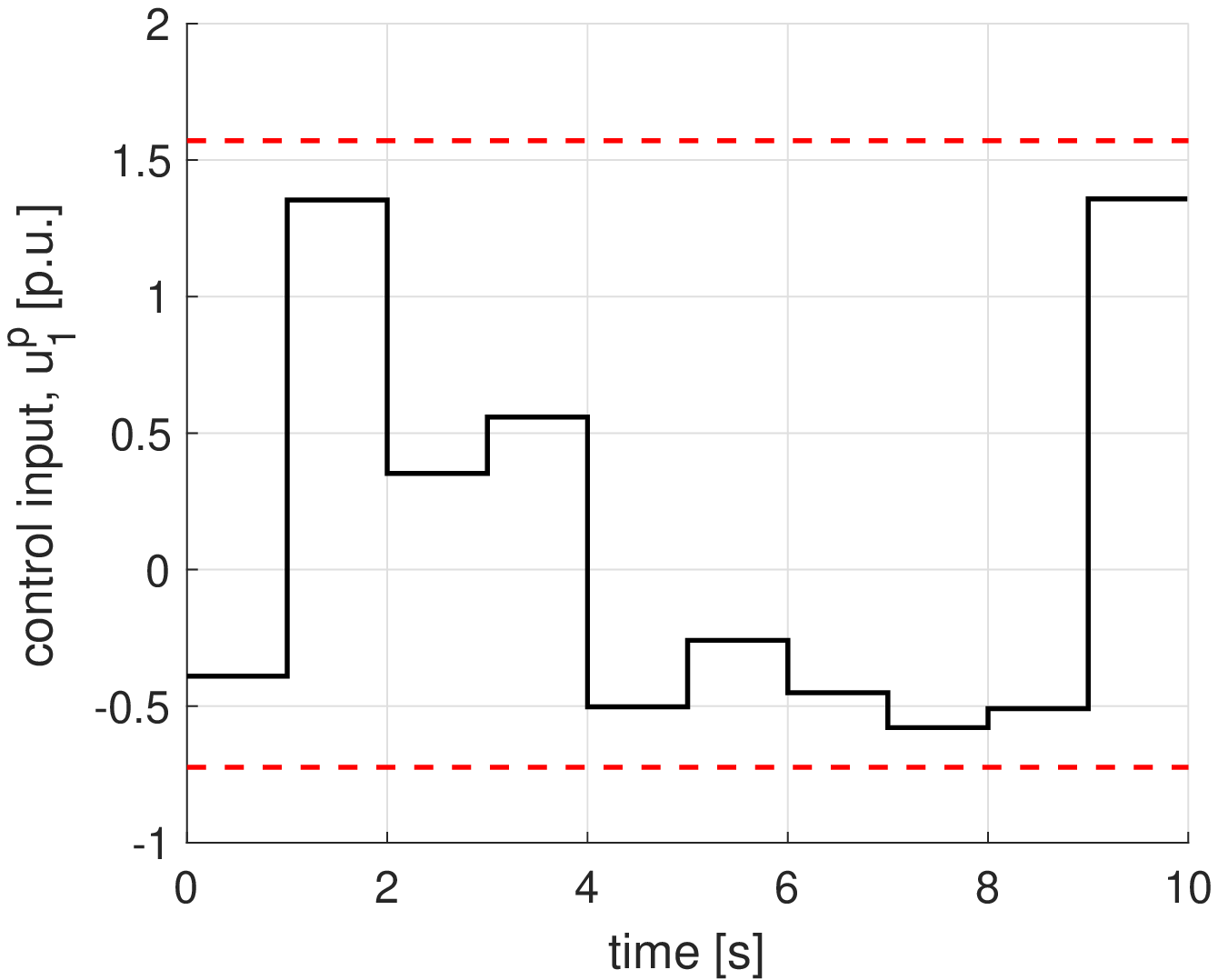}
    \caption{Stochastic choice of $u_1^p \in U_\omega$, where $U_\omega$ is the interval of safety admissible controls.}
    \label{fig:controls_frequency}
\end{subfigure}
\begin{subfigure}[b]{0.32\textwidth}
    \includegraphics[scale=0.4]{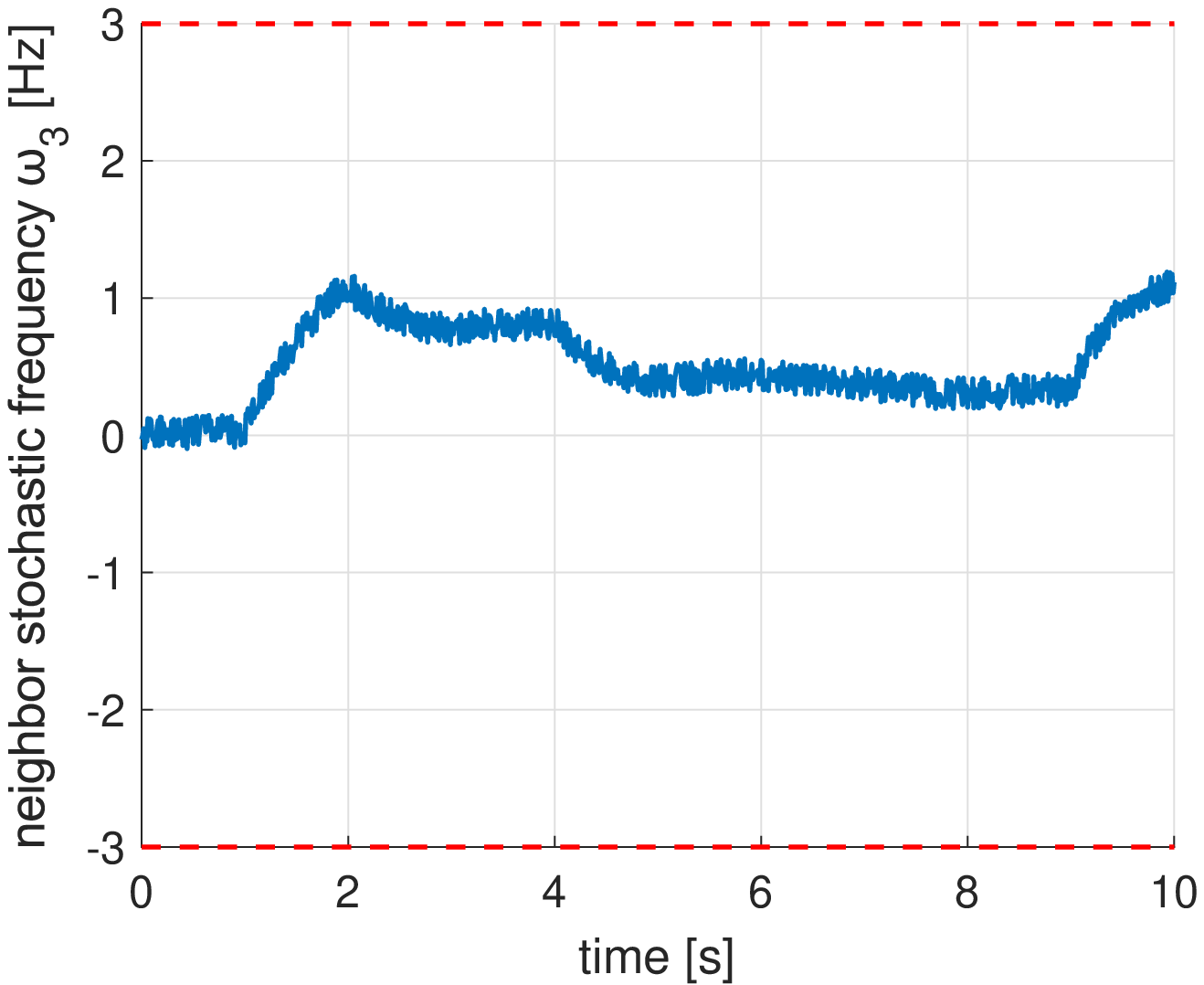}
    \caption{Stochastic choice of neighbor frequency $\omega_3 \in S_\omega \cap [\omega_1 - \Delta_\omega, \omega_1 + \Delta_\omega]$.}
    \label{fig:neighbors_omega}
\end{subfigure}

\medskip

\centering
\begin{subfigure}[b]{0.33\textwidth}
    \includegraphics[scale=0.4]{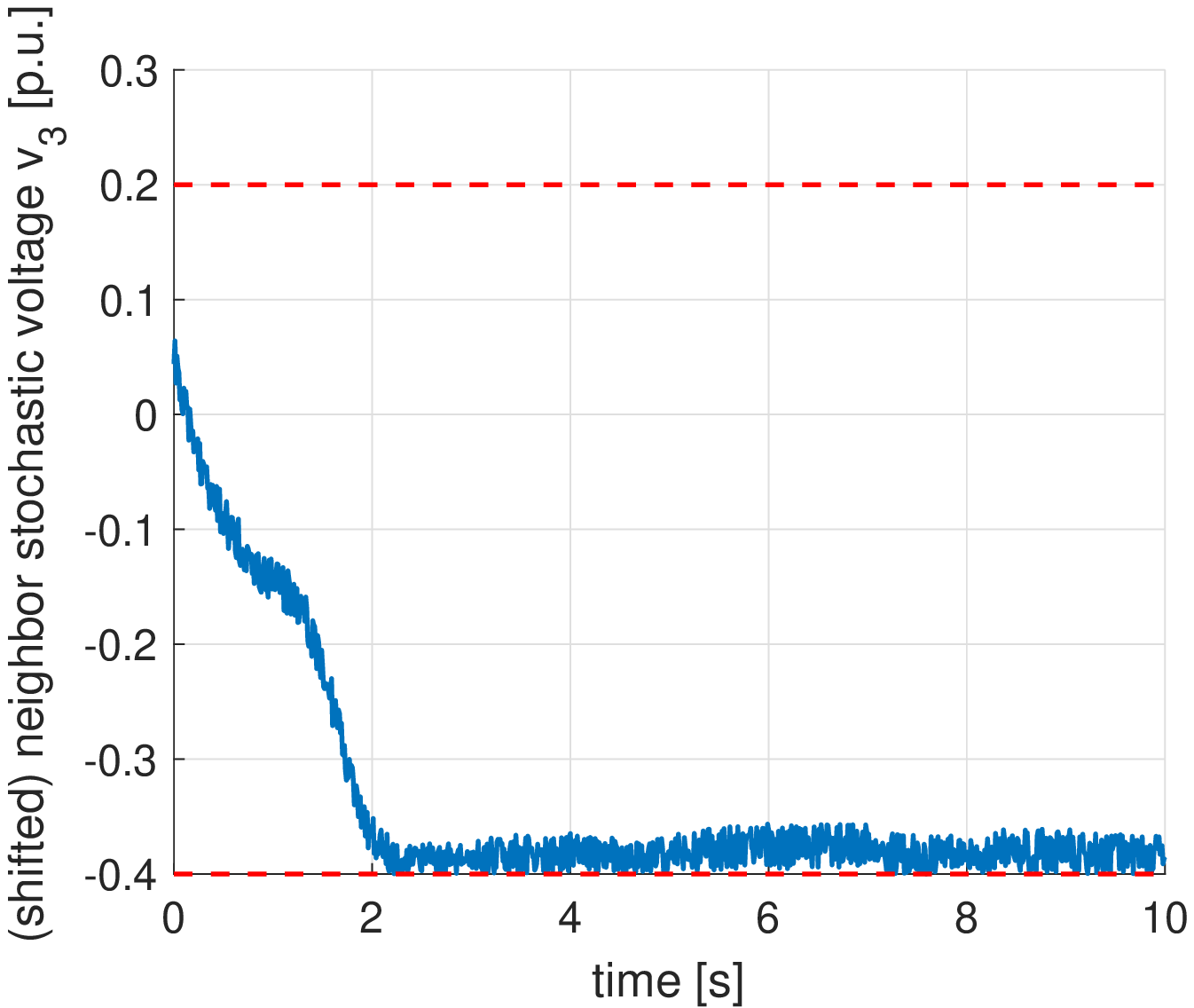}
    \caption{Stochastic choice of neighbor voltage\break
    $v_3 \in S_v \cap [v_1 - \Delta_v, v_1 + \Delta_v]$.}
    \label{fig:neighbors_v}
\end{subfigure}
\begin{subfigure}[b]{0.33\textwidth}
    \includegraphics[scale=0.4]{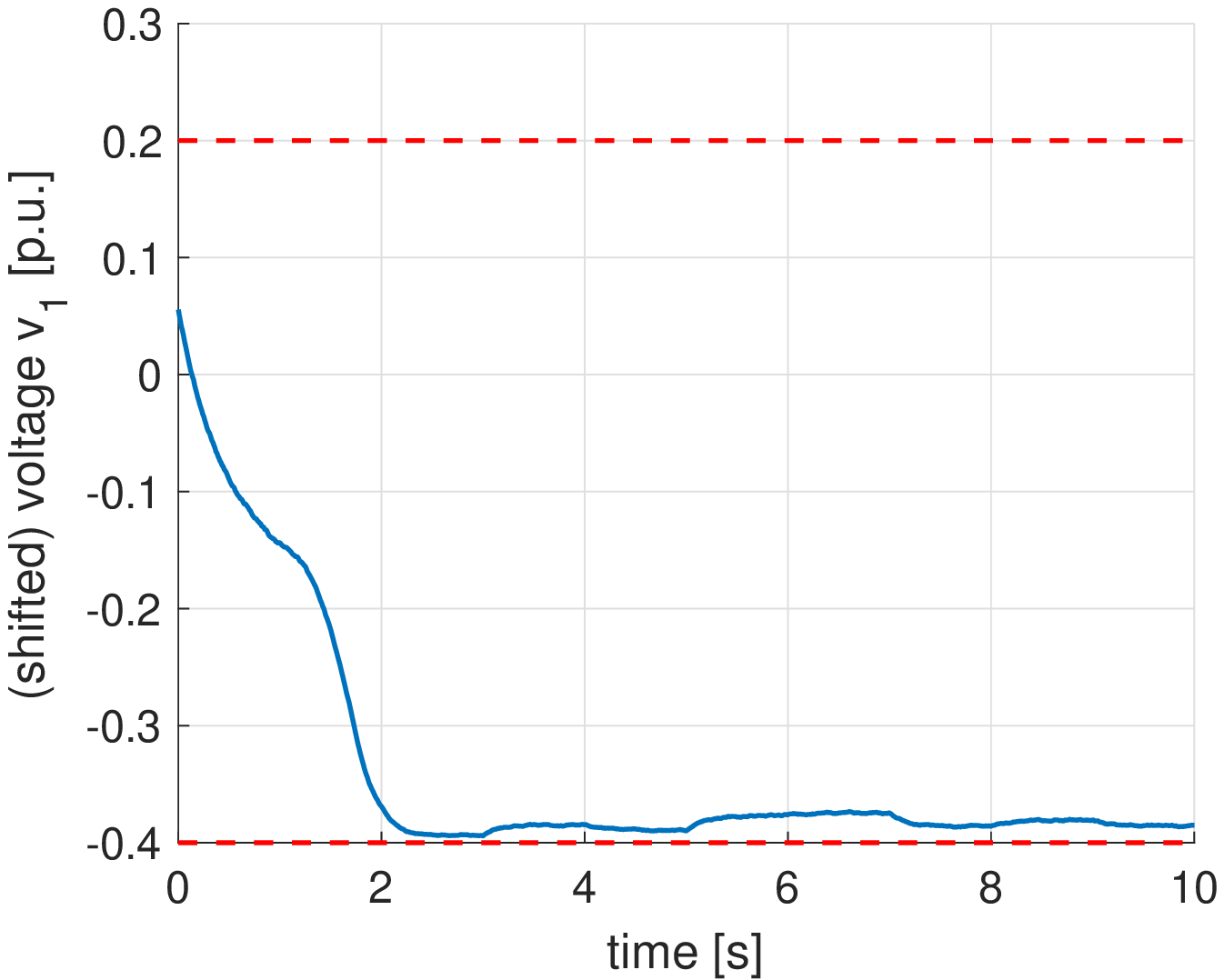}
    \caption{Evolution of $v_1$, kept in $S_v$ by $u_1^q$.}
    \label{fig:v1}
\end{subfigure}
\begin{subfigure}[b]{0.32\textwidth}
    \includegraphics[scale=0.4]{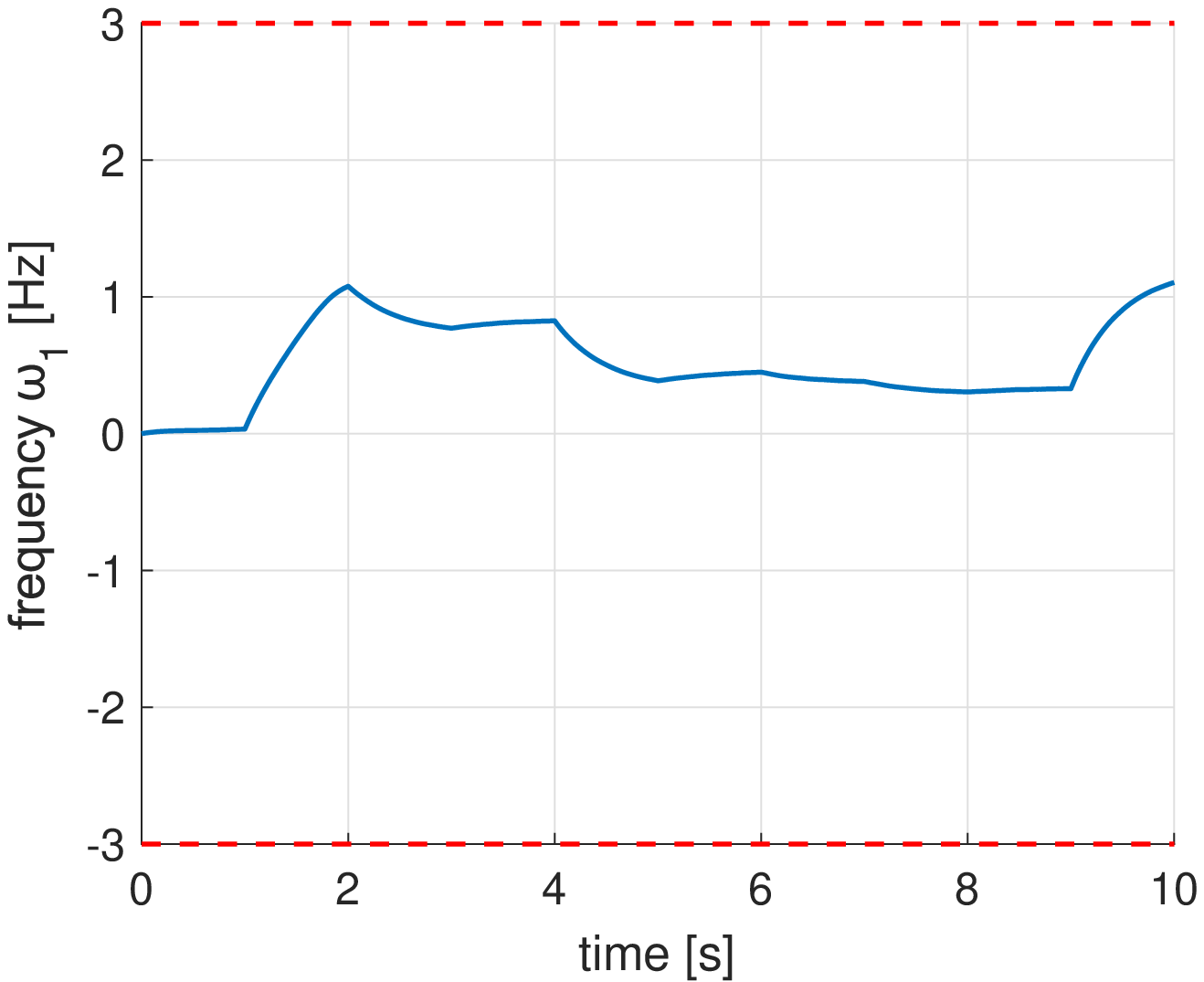}
    \caption{Evolution of $\omega_1$, kept in $S_\omega$ by $u_1^p$.}
    \label{fig:w1}
\end{subfigure}
\caption{Simulation of the voltage $v_1$ and frequency $\omega_1$ under stochastic safety admissible controls $u_1^q$ and $u_1^p$, and stochastic variations of neighbor states $\theta_2$, $\theta_3$, $v_2$ and $v_3$.}
\label{fig:simulation}
\end{figure*}

We can see that even randomly chosen controls, as long as they are within the safety admissible interval, enforce the safety of the system as $v_1 \in S_v$ and $\omega_1 \in S_\omega$ despite the stochastic variations of the neighbor states.

\vspace{3mm}

One could rightfully object that the stochastic nature of the variations of the neighbor states $\omega_k$ and $v_k$ prevent significant changes in $v_1$ and $\omega_1$ that could lead to safety violations. To overcome this limitation, we run a similar simulation where the neighbor states take their worst case values. To keep a constant phase angle $\theta_k = -\frac{\pi}{6}$rad, we need a constant frequency $\omega_k = 0$Hz and thus the frequency coupling must be removed by taking $\Delta_\omega = \infty$. 
We set the voltage at its lowest admissible bound, i.e., $v_k = \max\{ \underline{v}, v_i - \Delta_v\}$. We keep the same stochastic controls and same values of droop coefficients. 

Figure~\ref{fig:simulation_lb}(\subref{fig:v3_lb}) shows how the voltage of a neighboring inverter $v_3$ follows its admissible lower bound. As illustrated on Figure~\ref{fig:simulation_lb}(\subref{fig:v1_lb}) and \ref{fig:simulation_lb}(\subref{fig:w1_lb}) the voltage $v_1$ is maintained in $S_v$ and the frequency $\omega_1$ is maintained in $S_\omega$ despite the stochastic safety admissible controls and the lower bound neighbor states. Similar results are obtained when choosing $\theta_k$ at its upper bound $\frac{\pi}{6}$ and/or $v_k$ at its upper bound $\min\{ \overline{v}, v_i + \Delta_v \}$.

\begin{figure*}[htbp!]
\centering
\begin{subfigure}[b]{0.33\textwidth}
    \includegraphics[scale=0.4]{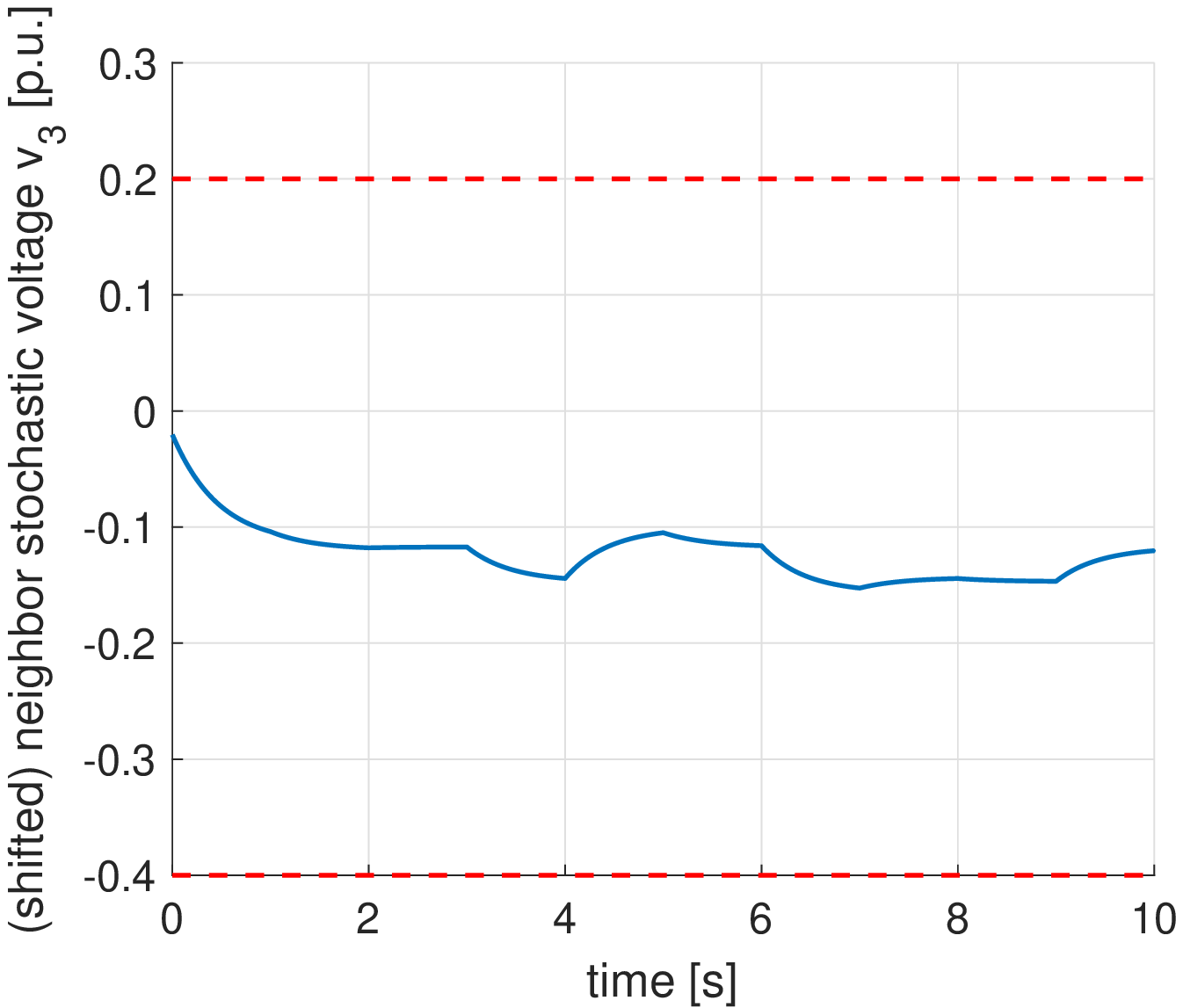}
    \caption{Evolution of $v_3 = \max\{ \underline{v}, v_i - \Delta_v\}$.}
    \label{fig:v3_lb}
\end{subfigure}
\begin{subfigure}[b]{0.33\textwidth}
    \includegraphics[scale=0.4]{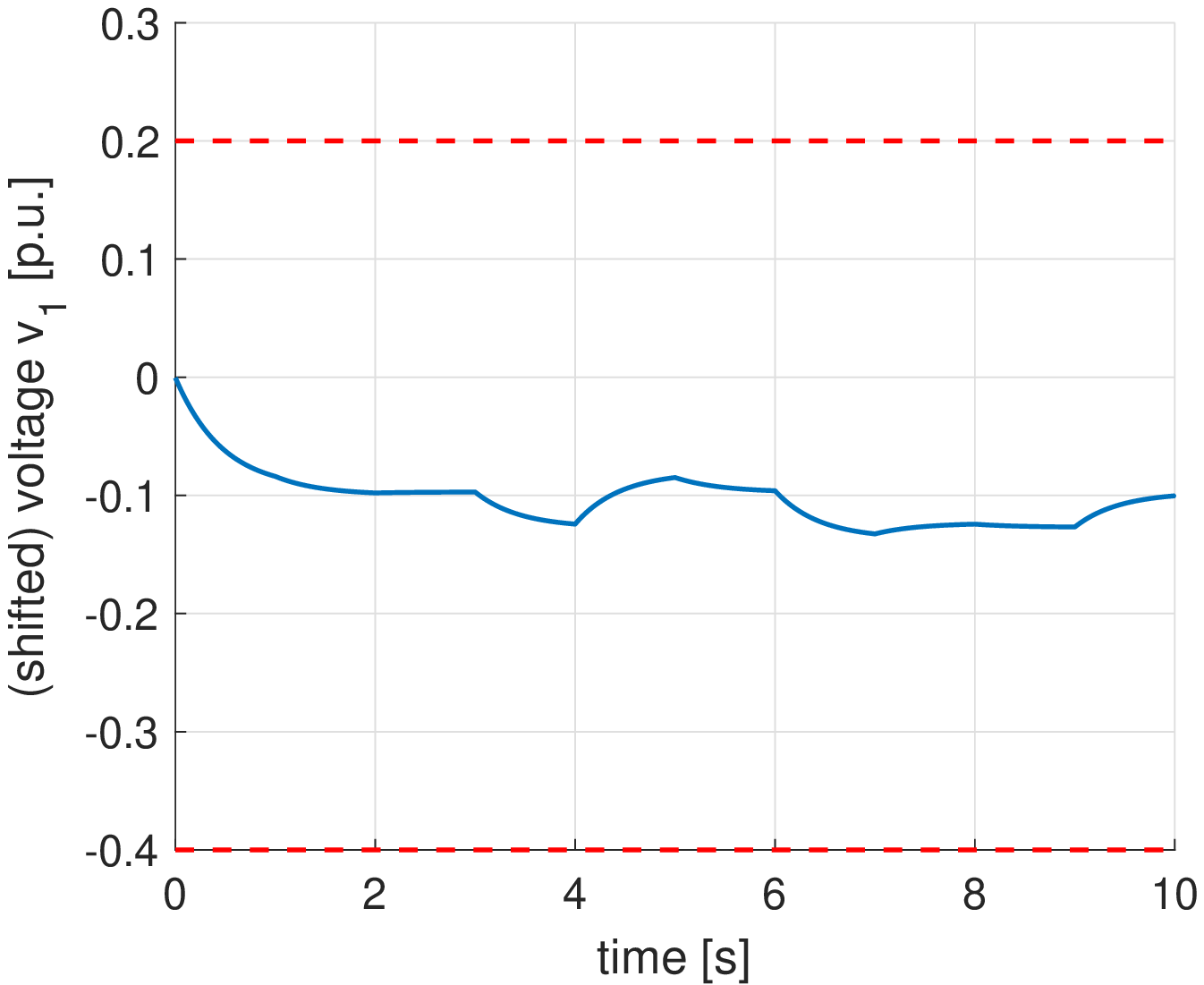}
    \caption{Evolution of $v_1$, kept in $S_v$ by $u_1^q$.}
    \label{fig:v1_lb}
\end{subfigure}
\begin{subfigure}[b]{0.32\textwidth}
    \includegraphics[scale=0.4]{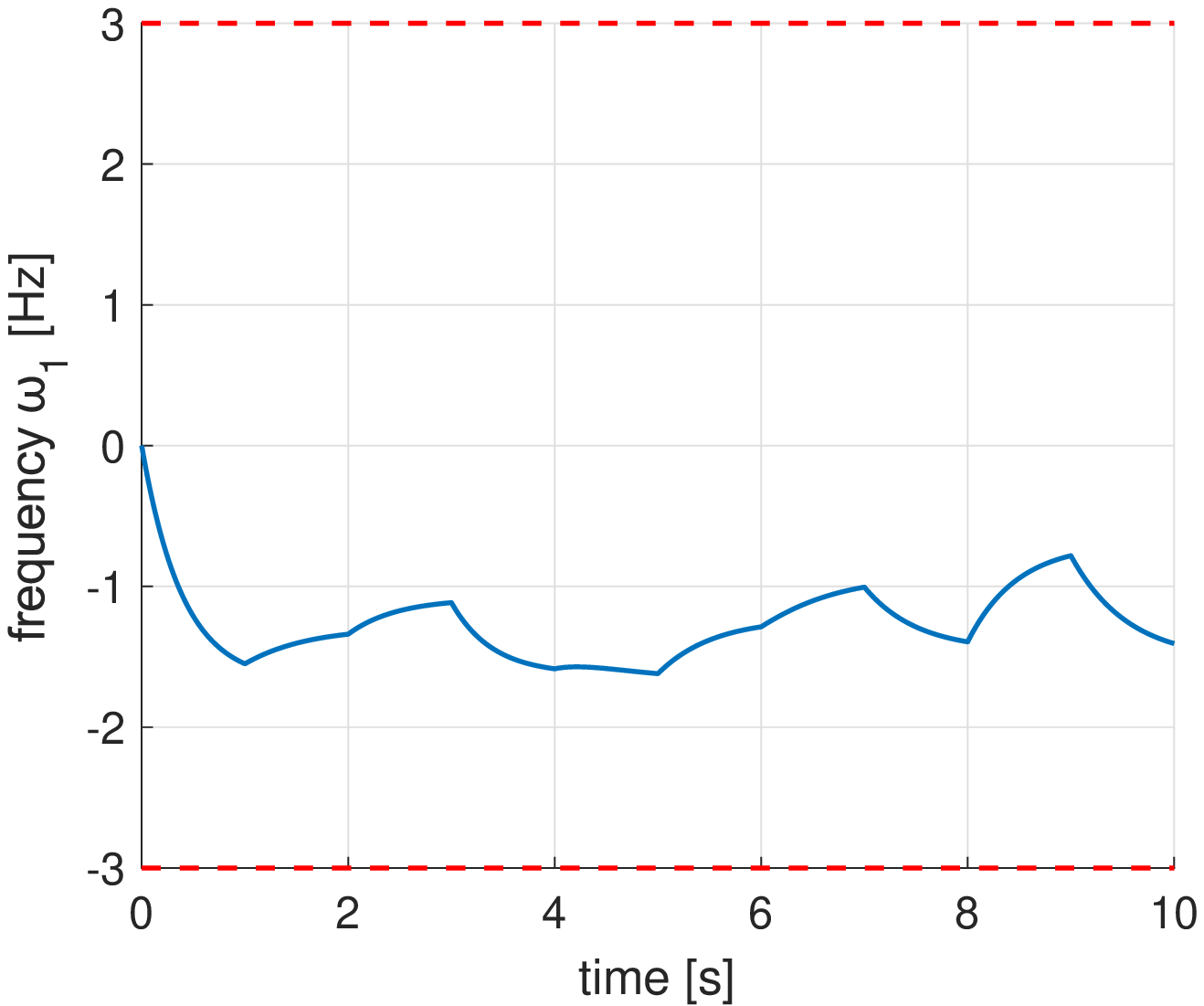}
    \caption{Evolution of $\omega_1$, kept in $S_\omega$ by $u_1^p$.}
    \label{fig:w1_lb}
\end{subfigure}
\caption{Simulation of the voltage $v_1$ and frequency $\omega_1$ under stochastic safety admissible controls $u_1^q$ and $u_1^p$, and lower bound choice of neighbor states $\theta_2$, $\theta_3$, $v_2$ and $v_3$.}
\label{fig:simulation_lb}
\end{figure*}

\section{Conclusion and Future Work}\label{sec:conclusion}

In this paper we considered the problem of transient safety in inverter-based microgrids. Relying on Nagumo's theorem, we developed two approaches to enforce the invariance of frequency and voltage sets of droop-controlled inverters. We solved the resulting optimization problems  with SOS algorithms and successfully illustrated the safety methods on a microgrid model.

There are three promising avenues of future work. 
We first want to compare the efficiency of our approach in terms of size of invariant set and of computation times with barrier function and explicit governor approaches.
We believe that a similar method can be used to handle safe energy storage, with only adding a state of charge constraint to the problem.
Finally, we want to demonstrate our approach in conjunction with system level optimal dispatch problem.

\section*{Acknowledgment}
This research was supported by the Resilience through Data-driven Intelligently-Designed Control (RD2C) Initiative, under the Laboratory Directed Research and Development (LDRD) Program at Pacific Northwest National Laboratory (PNNL). PNNL is a multi-program national laboratory operated for the U.S. Department of Energy (DOE) by Battelle Memorial Institute under Contract No. DE-AC05-76RL01830.

\bibliographystyle{IEEEtran}
\bibliography{references.bib}

\begin{thebibliography}{10}
\providecommand{\url}[1]{#1}
\csname url@samestyle\endcsname
\providecommand{\newblock}{\relax}
\providecommand{\bibinfo}[2]{#2}
\providecommand{\BIBentrySTDinterwordspacing}{\spaceskip=0pt\relax}
\providecommand{\BIBentryALTinterwordstretchfactor}{4}
\providecommand{\BIBentryALTinterwordspacing}{\spaceskip=\fontdimen2\font plus
\BIBentryALTinterwordstretchfactor\fontdimen3\font minus
  \fontdimen4\font\relax}
\providecommand{\BIBforeignlanguage}[2]{{%
\expandafter\ifx\csname l@#1\endcsname\relax
\typeout{** WARNING: IEEEtran.bst: No hyphenation pattern has been}%
\typeout{** loaded for the language `#1'. Using the pattern for}%
\typeout{** the default language instead.}%
\else
\language=\csname l@#1\endcsname
\fi
#2}}
\providecommand{\BIBdecl}{\relax}
\BIBdecl

\bibitem{Texas}
J.~W. Busby, K.~Baker, M.~D. Bazilian, A.~Q. Gilbert, E.~Grubert, V.~Rai, J.~D.
  Rhodes, S.~Shidore, C.~A. Smith, and M.~E. Webber, ``Cascading risks:
  Understanding the 2021 winter blackout in {Texas},'' \emph{Energy Research \&
  Social Science}, vol.~77, pp. 1 -- 10, 2021.

\bibitem{farrokhabadi2019microgrid}
M.~Farrokhabadi, C.~A. Ca{\~n}izares, J.~W. Simpson-Porco, E.~Nasr, L.~Fan,
  P.~A. Mendoza-Araya, R.~Tonkoski, U.~Tamrakar, N.~Hatziargyriou, D.~Lagos
  \emph{et~al.}, ``Microgrid stability definitions, analysis, and examples,''
  \emph{IEEE Transactions on Power Systems}, vol.~35, no.~1, pp. 13--29, 2019.

\bibitem{xu2016microgrids}
Y.~Xu, C.-C. Liu, K.~P. Schneider, F.~K. Tuffner, and D.~T. Ton, ``Microgrids
  for service restoration to critical load in a resilient distribution
  system,'' \emph{IEEE Transactions on Smart Grid}, vol.~9, no.~1, pp. 426 --
  437, 2016.

\bibitem{taylor2016power}
J.~A. Taylor, S.~V. Dhople, and D.~S. Callaway, ``Power systems without fuel,''
  \emph{Renewable and Sustainable Energy Reviews}, vol.~57, pp. 1322 -- 1336,
  2016.

\bibitem{maulik2018stability}
A.~Maulik and D.~Das, ``{Stability constrained economic operation of islanded
  droop-controlled DC microgrids},'' \emph{IEEE Transactions on Sustainable
  Energy}, vol.~10, no.~2, pp. 569 -- 578, 2018.

\bibitem{schiffer2014conditions}
J.~Schiffer, R.~Ortega, A.~Astolfi, J.~Raisch, and T.~Sezi, ``Conditions for
  stability of droop-controlled inverter-based microgrids,'' \emph{Automatica},
  vol.~50, no.~10, pp. 2457 -- 2469, 2014.

\bibitem{barklund2008energy}
E.~Barklund, N.~Pogaku, M.~Prodanovic, C.~Hernandez-Aramburo, and T.~C. Green,
  ``Energy management in autonomous microgrid using stability-constrained droop
  control of inverters,'' \emph{IEEE Transactions on Power Electronics},
  vol.~23, no.~5, pp. 2346 -- 2352, 2008.

\bibitem{kundu2019identifying}
S.~Kundu, W.~Du, S.~P. Nandanoori, F.~Tuffner, and K.~Schneider, ``Identifying
  parameter space for robust stability in nonlinear networks: A microgrid
  application,'' in \emph{American Control Conference}.\hskip 1em plus 0.5em
  minus 0.4em\relax IEEE, 2019, pp. 3111 -- 3116.

\bibitem{nandanoori2020distributed}
S.~P. Nandanoori, S.~Kundu, W.~Du, F.~K. Tuffner, and K.~P. Schneider,
  ``Distributed small-signal stability conditions for inverter-based unbalanced
  microgrids,'' \emph{IEEE Transactions on Power Systems}, vol.~35, no.~5, pp.
  3981 -- 3990, 2020.

\bibitem{guerrero2010hierarchical}
J.~M. Guerrero, J.~C. Vasquez, J.~Matas, L.~G. De~Vicu{\~n}a, and M.~Castilla,
  ``Hierarchical control of droop-controlled ac and dc microgrids—a general
  approach toward standardization,'' \emph{IEEE Transactions on Industrial
  Electronics}, vol.~58, no.~1, pp. 158--172, 2010.

\bibitem{kundu2019distributed}
S.~Kundu, S.~Geng, S.~P. Nandanoori, I.~A. Hiskens, and K.~Kalsi, ``Distributed
  barrier certificates for safe operation of inverter-based microgrids,'' in
  \emph{American Control Conference}.\hskip 1em plus 0.5em minus 0.4em\relax
  IEEE, 2019, pp. 1042 -- 1047.

\bibitem{chen2019compositional}
Y.~Chen, J.~Anderson, K.~Kalsi, S.~H. Low, and A.~D. Ames, ``Compositional set
  invariance in network systems with assume-guarantee contracts,'' in
  \emph{American Control Conference}.\hskip 1em plus 0.5em minus 0.4em\relax
  IEEE, 2019, pp. 1027--1034.

\bibitem{zhang2020distributed}
Y.~Zhang and J.~Cort{\'e}s, ``Distributed bilayered control for transient
  frequency safety and system stability in power grids,'' \emph{IEEE
  Transactions on Control of Network Systems}, vol.~7, no.~3, pp. 1476--1488,
  2020.

\bibitem{kundu2020transient}
S.~Kundu and K.~Kalsi, ``Transient safety filter design for grid-forming
  inverters,'' in \emph{American Control Conference}.\hskip 1em plus 0.5em
  minus 0.4em\relax IEEE, 2020, pp. 1299 -- 1304.

\bibitem{mayne2000constrained}
D.~Q. Mayne, J.~B. Rawlings, C.~V. Rao, and P.~O. Scokaert, ``Constrained model
  predictive control: Stability and optimality,'' \emph{Automatica}, vol.~36,
  no.~6, pp. 789--814, 2000.

\bibitem{almassalkhi2020hierarchical}
M.~Almassalkhi, S.~Brahma, N.~Nazir, H.~Ossareh, P.~Racherla, S.~Kundu, S.~P.
  Nandanoori, T.~Ramachandran, A.~Singhal, D.~Gayme \emph{et~al.},
  ``Hierarchical, grid-aware, and economically optimal coordination of
  distributed energy resources in realistic distribution systems,''
  \emph{Energies}, vol.~13, no.~23, p. 6399, 2020.

\bibitem{Nagumo}
M.~Nagumo, ``{\"U}ber die lage der integralkurven gew{\"o}hnlicher
  differentialgleichungen,'' \emph{Proceedings of the Physico-Mathematical
  Society of Japan. 3rd Series}, vol.~24, pp. 551 -- 559, 1942.

\bibitem{invariance}
F.~Blanchini, ``Set invariance in control,'' \emph{Automatica}, vol.~35,
  no.~11, pp. 1747 -- 1767, 1999.

\bibitem{nicotra2018explicit}
M.~M. Nicotra and E.~Garone, ``The explicit reference governor: A general
  framework for the closed-form control of constrained nonlinear systems,''
  \emph{IEEE Control Systems Magazine}, vol.~38, no.~4, pp. 89--107, 2018.

\bibitem{Brezis}
H.~Brezis, ``On a characterization of flow-invariant sets,''
  \emph{Communications on Pure and Applied Mathematics}, vol.~23, no.~2, pp.
  261 -- 263, 1970.

\bibitem{schneider2018improving}
K.~P. Schneider, N.~Radhakrishnan, Y.~Tang, F.~K. Tuffner, C.-C. Liu, J.~Xie,
  and D.~Ton, ``Improving primary frequency response to support networked
  microgrid operations,'' \emph{IEEE Transactions on Power Systems}, vol.~34,
  no.~1, pp. 659 -- 667, 2018.

\bibitem{elizondo2016inertial}
J.~Elizondo, R.~Y. Zhang, P.-H. Huang, J.~K. White, and J.~L. Kirtley,
  ``Inertial and frequency response of microgrids with induction motors,'' in
  \emph{17th Workshop on Control and Modeling for Power Electronics}.\hskip 1em
  plus 0.5em minus 0.4em\relax IEEE, 2016, pp. 1 -- 6.

\bibitem{kersting1991radial}
W.~H. Kersting, ``Radial distribution test feeders,'' \emph{IEEE Transactions
  on Power Systems}, vol.~6, no.~3, pp. 975--985, 1991.

\bibitem{ersal2011impact}
T.~Ersal, C.~Ahn, I.~A. Hiskens, H.~Peng, and J.~L. Stein, ``Impact of
  controlled plug-in evs on microgrids: A military microgrid example,'' in
  \emph{Power and Energy Society General Meeting}.\hskip 1em plus 0.5em minus
  0.4em\relax IEEE, 2011, pp. 1--7.

\bibitem{sostools}
A.~Papachristodoulou, J.~Anderson, G.~Valmorbida, S.~Prajna, P.~Seiler, and
  P.~A. Parrilo, \emph{{SOSTOOLS}: Sum of squares optimization toolbox for
  {MATLAB}}, 2013, available from \texttt{http://www.cds.caltech.edu/sostools}.

\bibitem{Sedumi}
J.~F. Sturm, ``Using {SeDuMi} 1.02, a {MATLAB} toolbox for optimization over
  symmetric cones,'' \emph{Optimization Methods and Software}, vol.~11, pp. 625
  -- 653, 1999, software available at
  \texttt{http://fewcal.kub.nl/sturm/software/sedumi.html}.

\bibitem{Julia}
J.~Bezanson, S.~Karpinski, V.~B. Shah, and A.~Edelman, ``Julia: A fast dynamic
  language for technical computing,'' \emph{arXiv preprint arXiv:1209.5145},
  2012.

\bibitem{sosJulia}
B.~Legat, C.~Coey, R.~Deits, J.~Huchette, and A.~Perry, ``{Sum-of-squares
  optimization in Julia},'' in \emph{First Annual JuMP-dev Workshop}, 2017.

\bibitem{SDPA}
M.~Yamashita, K.~Fujisawa, M.~Fukuda, K.~Kobayashi, K.~Nakata, and M.~Nakata,
  ``Latest developments in the {SDPA} family for solving large-scale {SDP}s,''
  in \emph{Handbook on semidefinite, conic and polynomial optimization}.\hskip
  1em plus 0.5em minus 0.4em\relax Springer, 2012, pp. 687 -- 713.

\bibitem{Mosek}
J.~Dahl, ``Semidefinite optimization using {MOSEK},'' in \emph{International
  Symposium on Mathematical Programming}, 2012.

\end{thebibliography}

\end{document}